\newif\ifcompileiliad 
\newif\ifDownscaledFinalDoc
\newif\ifDraft
\newtheorem{thm}{Theorem}
\theoremstyle{definition}
\newtheorem{ass}{Assumption}[section]
\renewcommand{\vec}[1]{\bm{#1}}
\begin{document}

\title{Non-parametric estimation of conditional densities: A new method}

\author{H\r{a}kon Otneim \footnote{\noindent Norwegian School of Economics, Department of Business and Management Science, Helleveien 30, 5045 Bergen, Norway. \newline  E-mail: \texttt{hakon.otneim@nhh.no}} \and Dag Tj{\o}stheim \footnote{University of Bergen, Department of Mathematics, P.B. 7803, 5020 Bergen, Norway}}
\date{\vspace{-5ex}}
\maketitle

\begin{abstract}
Let $\vec{X} = (X_1,\ldots,X_p)$ be a stochastic vector having joint density function $f_{\vec{X}}(\vec{x})$ with partitions $\vec{X}_1 = (X_1,\ldots,X_k)$ and $\vec{X}_2 = (X_{k+1},\ldots,X_p)$. A new method for estimating the conditional density function of $\vec{X}_1$ given $\vec{X}_2$ is presented. It is based on locally Gaussian approximations, but simplified in order to tackle the curse of dimensionality in multivariate applications, where both response and explanatory variables can be vectors. We compare our method to some available competitors, and the error of approximation is shown to be small in a series of examples using real and simulated data, and the estimator is shown to be particularly robust against noise caused by independent variables. We also present examples of practical applications of our conditional density estimator in the analysis of time series. Typical values for $k$ in our examples are 1 and 2, and we include simulation experiments with values of $p$ up to 6. Large sample theory is established under a strong mixing condition.
\end{abstract}

\noindent
Keywords: Conditional density estimation, local likelihood, multivariate data, cross-validation.

\section{Introduction}
 \label{sec:introduction}
The need for expressing statistical inference in terms of conditional quantities is ubiquitous in most natural and social sciences. The obvious example is the estimation of the mean of some set of response variables conditioned on sets of explanatory variables taking specified values. Other common tasks are the forecasting of volatilities or quantiles of financial time series conditioned on past history. Problems of this kind often call for some sort of regression analysis, of which the literature provides an abundance of choices. 

Conditional means, variances and quantiles are all properties of the conditional density, if it exists, as are all other probabilistic statements that we might ever want to make about the response variables given the explanatory variables. It is therefore clearly of interest to obtain good estimates of the entire conditional distribution in order to make use of all the evidence contained in the data, and to provide the user with a wide variety of options in analysing and visualising the relationships of the variables under study.

The classical method for non-parametric density estimation is the kernel estimator \citep{rosenblatt1956remarks, parzen1962estimation}, which in the decades following its introduction has been refined and developed in many directions. Especially the crucial choice of smoothing parameter, or bandwidth, has been addressed by several authors, including \cite{silverman1986density}, \cite{sheather1991reliable} and \cite{chacon2010multivariate}. The kernel estimator suffers greatly from the curse of dimensionality however, which quickly inhibits its use in multivariate problems. Several alternative methods of estimation has been proposed to improve performance if the subject of estimation is a joint multivariate density function, most recently the LGDE (locally Gaussian density estimator) by \cite{otneim2016locally}, which the work in the present paper takes as its starting point. Very few methods exist for the non-parametric estimation of conditional densities though, especially if we do not wish to restrict ourselves to cases with one-dimensional response and/or explanatory variables. This lack of methodology is surprising, considering the aforementioned importance of estimating conditional densities; the practical use of which is of altogether greater interest than unconditional density estimates, as is illustrated by some of its possible applications in Section \ref{sec:examples}.   

In this paper we present a new method for estimating conditional densities based on local Gaussian approximations. Let $\vec{X} = (X_1, \ldots, X_p)$ be a stochastic vector, and, assuming existence, denote by $f_{\vec{X}}(\cdot)$ its joint density function. Further, let $(\vec{X}_1; \vec{X}_2) = (X_1, \allowbreak \ldots, X_k; X_{k+1}, \ldots, X_p)$ be a partitioning of $\vec{X}$. Then the conditional density of $\vec{X}_1$ given $\vec{X}_2 = \vec{x}_2$ is defined by
\begin{equation}\label{eq:cond}f_{\vec{X}_1|\vec{X}_2}(\vec{x}_1|\vec{X}_2 = \vec{x}_2) = \frac{f_{\vec{X}}(\vec{x}_1,\vec{x}_2)}{f_{\vec{X}_2}(\vec{x}_2)},\end{equation}
where $f_{\vec{X}_2}$ is the marginal density of $\vec{X}_2$.

The problem of estimating (\ref{eq:cond}) is not trivial. We do not observe data directly from the density that we wish to estimate, so we need a different set of tools than those used in the unconditional case. A natural course of action is to follow \cite{rosenblatt1969conditional} in obtaining good estimates of the numerator and denominator of (\ref{eq:cond}) separately using the kernel estimator, and use the definition directly. \citet{chen2001estimation} provide a discussion of choosing the bandwidths when using the kernel estimator to estimate the components, as do \cite{bashtannyk2001bandwidth}. \citet[chap. 5]{li2007nonparametric} give a unified approach to estimating conditional densities using the kernel estimator, which allows a mix of continuous and discrete variables, and automatically smooths out the irrelevant ones. 

Unless one has a very good estimate of the marginal density, however, it is less than ideal to put a kernel estimate in the denominator of (\ref{eq:cond}). This is remedied by \cite{faugeras2009quantile}, who writes the conditional density as a product of the marginal and copula density functions in the bivariate case,
\begin{equation}\label{eq:condcop}
f_{X_1|X_2}(x_1|X_2=x_2) = f_{X_1}(x_1)c\left\{F_{1}(x_1), F_{2}(x_2)\right\},
\end{equation} 
where $f_{X_1}$ is the marginal density of $X_1$, $F_{1}$ and $F_{2}$ are the marginal distribution functions, $c$ is the copula density of $(X_1, X_2)$, and estimates those separately using the kernel estimator. The formula (\ref{eq:condcop}) can be generalized to the case of several covariates, but its practical use in higher dimensions is questionable because of boundary and dimensionality issues, unless one obtains better estimates of the multivariate copula density than provided by the kernel estimator, such as the local likelihood approach by \cite{geenens2014probit}.  

\citet{hyndman1996estimating} starts to move away from the kernel estimator by adjusting the conditional mean to match a better performing regression technique, such as local polynomials, while \citet{fan1996estimation} estimate the conditional density directly using locally linear and locally quadratic fits, a method that \citet{hyndman2002nonparametric} refine by constraining it to always be non-negative. The latter authors propose in the same paper a local likelihood approach which is based on some of the same machinery as we will employ in this paper, and \citet{fan2004crossvalidation} provide a cross-validation rule for bandwidth selection in the locally parametric models. These methods are to date implemented in the bivariate case only, however, where the response- and explanatory variables are both scalars.

Indeed, the main motivation behind our new method is to provide an estimator that can handle a greater number of variables without the requirement that either response or explanatory variables are scalar.

\citet{holmes2012fast} develop a fast bandwidth selection algorithm, while correctly pointing out that bandwidth selection is a formidable computational and time-consuming task in non-parametric multivariate density estimation. We argue that the curse of dimensionality is an even bigger problem, because it will not be solved by clever algorithms, but is an inherent problem in all non-parametric analysis. We therefore base our method on the newly developed locally Gaussian density estimator (LGDE) \citep{otneim2016locally}, which shows a promising robustness against dimensionality issues when estimating the multivariate unconditional density function. By exploiting locally the property of the Gaussian distribution that conditional densities are again Gaussian, we will see that conditional density estimates are readily available from the LGDE.    

This paper is organized as follows: In Section \ref{sec:LGDE} we give a short introduction to the LGDE method for multivariate \emph{un}conditional density estimation, and in Section \ref{sec:estimation} we show that extracting conditional density estimates from the LGDE is straightforward and requires neither additional estimation steps, nor integration over the joint density estimate. In Section \ref{sec:asymp} we derive some large-sample properties for our estimator under a strong mixing condition, and proceed in Section \ref{sec:examples} with a series of examples using real and simulated data, indicating the wide potential of conditional density estimation. Some concluding remarks and suggestions for further research follow in Section \ref{sec:conclusion}, and we include an appendix that contains the technical proofs.   

\section{A brief introduction to the LGDE} \label{sec:LGDE}
Because of its close relationship with our conditional density estimator, we include here a basic account of the LGDE. Suppose that we wish to estimate the full $p$-variate density $f_{\vec{X}}$ based on $n$ independent observations $\vec{X}_1, \ldots,\vec{X}_n$. \citet{hjort1996locally} provide a general setup for fitting a parametric family of densities $\psi(\cdot, \vec{\theta})$ \emph{locally} to the unknown density by maximising the local log-likelihood function in each point $\vec{x}$;
\begin{equation}
\widehat{\vec{\theta}}(\vec{x}) = {\arg\max}_{\vec{\theta}}\,\,n^{-1}\sum_{i=1}^nK_{\vec{h}}(\vec{X}_i- \vec{x})\log\psi(\vec{X}_i, \vec{\theta}) - \int K_{\vec{h}}(\vec{y} - \vec{x})\psi(\vec{y}, \vec{\theta})\,\textrm{d}\vec{y},\label{eq:loclik}
\end{equation}
so that the estimated density is given by $\widehat f_{\vec{X}}(\vec{x}) = \psi(\vec{x},\allowbreak \widehat{\vec{\theta}}(\vec{x}))$. We use standard notation, letting $\vec{h}$ denote a diagonal matrix of bandwidths, $K(\cdot)$ a symmetric kernel function integrating to one, and $K_{\vec{h}}(\vec{x}) = |\vec{h}|^{-1}K(\vec{h}^{-1}\vec{x})$. 
Denote by $\phi$ and $\Phi$ the univariate standard normal density and distribution functions respectively,
$$\phi(z) = (2\pi)^{-1/2}\exp\left\{-z^2/2\right\}, \,\, \Phi(z) = \int_{-\infty}^{z}\phi(y)\,\textrm{d}y.$$
According to \cite{otneim2016locally}, we can write the $p$-variate density function $f_{\vec{X}}$ as
\begin{equation}
f_{\vec{X}}(\vec{x}) = f_{\vec{Z}}\left(\Phi^{-1}\left(F_1(x_1)\right), \ldots,\Phi^{-1}\left( F_p(x_p) \right)\right)\prod_{i=1}^p \frac{f_i(x_i)}{\phi\left(\Phi^{-1}\left( F_i(x_i)\right)\right)} \label{eq:transformeddensity} 
\end{equation} 
where $f_i$ and $F_i$, $i = 1,\ldots,p$, are the marginal densities and distribution functions of $f_{\vec{X}}$, and  $f_{\vec{Z}}$ is the density function of a stochastic vector $\vec{Z} = (Z_1,\ldots, Z_p)$ with standard normal margins, and $Z_i = \Phi^{-1}\left(F_i(X_i)\right)$.  

We estimate $f_{\vec{Z}}$ by locally fitting the standardized normal distribution,  
\begin{equation}
\psi(\vec{z},\vec{\theta}) = \psi(\vec{z},\vec{R}) = (2\pi)^{-p/2}|\vec{R}|^{-1/2}\exp\left\{-\frac{1}{2}\vec{z}^T\vec{R}^{-1}\vec{z}\right\},\label{eq:gaussfam}
\end{equation}
with $\vec{R} = \vec{R}(\vec{z}) = \{\rho_{ij}(\vec{z})\}$ denoting the correlation matrix, based on the marginally Gaussian pseudo-observations 
\begin{equation}
\widehat{\vec{Z}}_j = \left(\Phi^{-1}(\widehat F_1(X_{j1})), \ldots, \Phi^{-1}(\widehat F_p(X_{jp}))\right)^T,\qquad j = 1,\ldots, n,\label{eq:pseudo}
\end{equation}
where  $\widehat{F}_k(x_k), k = 1,\ldots,p$ are estimates of the marginal distribution functions, which, in our asymptotic results are assumed to be the empirical marginal distribution functions. There are several reasons for transforming all observation vectors to the standard Gaussian scale. First of all, it makes the choice of the Gaussian distribution as local parametric family in (\ref{eq:gaussfam}) very natural, where, in particular, we have fixed all means and standard deviations so they are equal to $0$ and $1$ respectively, reducing the number of parameters that we must estimate locally. Moreover, the normalisation (\ref{eq:pseudo}) is a quick way to make the data more tidy, because the pseudo-observations will all be on the same scale, and there will not be any outliers which is otherwise known to create problems when using cross-validation to select bandwidths \citep{hall1987kullback}. In general, distributions become easier to estimate when they are closer to the Gaussian distribution, as shown and exploited by \cite{wand1991transformations} and \cite{ruppert1994bias}.

In (\ref{eq:gaussfam}), each correlation $\rho_{ij}(\vec{z})$ depends on the coordinates of the entire $\vec{z}$-vector, making its estimation difficult because of the curse of dimensionality. In regression problems, this issue may be tackled by imposing an additive structure on the unknown regression function:
$$Y=f(X_1,\ldots,X_p) + \epsilon = f_1(X_1) + \cdots + f_p(X_p) + \epsilon,$$
possibly with higher order interactions if the data can support it. One motivation behind the LGDE is to introduce a similar idea to density estimation, and it is based on the fact that a \emph{global} Gaussian fit is produced by calculating the correlation coefficients between each pair of variables by using only the corresponding observation vectors.  In order to circumvent the curse of dimensionality, \cite{otneim2016locally} carry this procedure over to the local case by restricting $\rho_{ij}(\vec{z})$ so that it
is only allowed to depend on its own variables; i.e. $\rho_{ij}(\vec{z}) = \rho_{ij}(z_i, z_j)$. The corresponding estimate $\widehat{\rho}(z_i, z_j)$ is computed from the corresponding simplified pairwise local log likelihood so that we can take
\begin{equation} \label{eq:simplification}
\widehat\rho_{ij}(z_1,\ldots,z_p) = \widehat\rho_{ij}(z_i,z_j).
\end{equation} 
This technique effectively reduces the estimation of $f_{\vec{X}}$ to a series of bivariate local problems, which is reflected in the rate of convergence in the following asymptotic result, that holds under some standard regularity conditions \citep{otneim2016locally} and proven for sets of iid observations:
\begin{equation} \label{eq:anormLGDE}
\sqrt{nh_n^2}\left(\widehat f_{\vec{X}}(\vec{x}) - f_0(\vec{x})\right) \stackrel{\mathcal{L}}{\rightarrow}N\left(0, \sigma^2_{f_{\vec{X}}}\right),
\end{equation}
 where, in general, $f_0(\vec{x})\neq f_{\vec{X}}(\vec{x})$ is the population density towards which the LGDE converges. Here, $f_0(\vec{x})$ is the simplified density obtained from (\ref{eq:transformeddensity}) and (\ref{eq:gaussfam}) by replacing $f_{\vec{Z}}(\vec{z})$ with $\Psi(\vec{z},\vec{R}_0)$, where $\vec{R}_0=\{\rho_{0,ij}(z_i,z_j)\}$ and $\rho_{0,ij}$ is the true local Gaussian correlation between $Z_i$ and $Z_j$, as will be defined in Section \ref{sec:asymp}. 

\cite{otneim2016locally} propose two methods for bandwidth selection. Cross-validation is used to determine the bandwidths that minimise the estimated Kullback-Leibler distance between the density estimate and the true density. They also employ the $k$-nearest neighbor technique in order to obtain adaptive bandwidths, but simulation results suggest that, of the two, the global bandwidth selector performs better. Indeed, as already mentioned, \cite{hall1987kullback} shows that the performance of cross-validation bandwidth selection depends on the tails of the underlying distribution not being thicker than the tails of the kernel function. By transforming the data to marginal standard normality, and using the Gaussian kernel function, it follows that the cross-validation procedure is well suited for selecting the LGDE bandwidths. 

\section{Estimating the conditional density} \label{sec:estimation}
Conditional density estimates are in principle available from any non-parametric estimate of the unconditional density of all variables. Let us return to the problem in Section \ref{sec:introduction}, and suppose that we obtain an estimate $\tilde{f}_{\vec{X}}$ of $f_{\vec{X}}$ in the process of estimating the left hand side of (\ref{eq:cond}). The corresponding marginal density $\tilde{f}_{\vec{X}_2}$ that ideally we should put in the denominator of (\ref{eq:cond}) is given by
$$\tilde{f}_{\vec{X}_2} = \int \tilde f_{\vec{X}} \,\textrm{d}\vec{x}_1,$$
but one must usually turn to numerical methods in order to obtain this integral, which can be a costly affair in terms of computing power, especially when there are many variables over which to integrate. Thus, estimating the marginal density directly from the data is often quicker, but introduces a new source of uncertainty that, again, will be difficult to handle in case of several explanatory variables.

We proceed to show that this problem is completely circumvented if we use the LGDE strategy for estimation. As is well known for a multivariate Gaussian distribution, every conditional density that can be formed by partitioning the Gaussian vector and computing the fraction (\ref{eq:cond}), is again Gaussian, and where the (conditional) mean and (conditional) covariance matrix in that Gaussian can be easily computed; see e.g. \citet[Chap. 4]{johnson2007applied}. This is of course also the case for the fraction of Gaussians that are local approximations, and we can obtain estimates by using these formulas. In more detail, starting from the $p$-variate density in (\ref{eq:transformeddensity}),

\begin{align*}f_{\vec{X}_1|\vec{X}_2}(\vec{x}_1|\vec{X}_2 = \vec{x}_2) &= \frac{ f_{\vec{X}}(\vec{x})}{ f_{\vec{X}_2}(\vec{x}_2)} \\
&= \frac{f_{\vec{Z}}(z_1,\ldots,z_p)}{f_{\vec{Z}_2}(z_{k+1},\ldots,z_p)} \prod_{i=1}^k \frac{ f_i(x_i)}{\phi\left(z_i\right)},\end{align*}
where $f_{\vec{Z}}/f_{\vec{Z}_2}$ can be seen locally as a fraction of a $p$-variate and a $p-k$-variate Gaussian function, each with all expectations equal to zero, and with correlation matrices $\vec{R}(\vec{z})$ and $\vec{R}_{22}(\vec{z})$ respectively. The latter notation is natural because of the pairwise analysis, so that $\vec{R}_{22}(\vec{z})$ is \emph{exactly equal} to the lower right block of $\vec{R}(\vec{z})$. Thus, in every grid point $\vec{z}$, $f_{\vec{Z}_2}$ is exactly the marginal density of the $p-k$ last variables of $f_{\vec{Z}}$, and we can use the basic result for the multivariate normal distribution mentioned above to rewrite the fraction. Partition $\vec{R}(\vec{z})$ into four blocks, of which the lower right block is $\vec{R}_{22}(\vec{z})$:
$$\vec{R}(\vec{z}) = \begin{pmatrix}\vec{R}_{11} & \vec{R}_{12}\\ \vec{R}_{21} & \vec{R}_{22} \end{pmatrix}$$
Then 
\begin{equation} \label{eq:fraction}
f_{\vec{Z}}/f_{\vec{Z}_2} = \Psi^*(z_1, \ldots, z_k; \vec{\mu}^*, \vec{\Sigma}^*),
\end{equation}
where $\Psi^*(\cdot)$ is the general $k$-variate Gaussian density with expectation vector and covariance matrix given by 
\begin{align}
\vec{\mu}^* &= \vec{R}_{12}\vec{R}_{22}^{-1}\vec{z}_2, \label{eq:locmean}\\
\vec{\Sigma}^* &= \vec{R}_{11} - \vec{R}_{12}\vec{R}_{22}^{-1}\vec{R}_{21} \label{eq:locvar},
\end{align}
where $\vec{z_2} = (z_{k+1}, \ldots, z_{p})$. Note that we may use correlation- and covariance matrices interchangeably, because all standard deviations are equal to one in $f_{\vec{Z}}$ and $f_{\vec{Z}_2}$. 

We can now obtain an estimate of $f_{\vec{X}_1|\vec{X}_2=\vec{x}_2}$ by plugging in local likelihood estimates of $\vec{R}(\vec{z}) = \{\rho_{ij}(z_i, z_j)\}$, resulting in 
\begin{equation}\widehat f_{\vec{X}_1|\vec{X}_2}(\vec{x}_1|\vec{X}_2 = \vec{x}_2)= \Psi^*\left(\widehat{\vec{z}}; \widehat{\vec{\mu}^*}(\widehat{\vec{z}}),\widehat{\vec{\Sigma}^*}(\widehat{\vec{z}}) \right) \prod_{i=1}^k \frac{\widehat {f}_i(x_i)}{\phi\left(\widehat{z_i}\right)},\label{eq:backtrans}\end{equation}
where $\widehat{\vec{\mu}^*}(\widehat{\vec{z}})$ and $\widehat{\vec{\Sigma}^*}(\widehat{\vec{z}})$ are obtained by substituting local correlation estimates into equations (\ref{eq:locmean}) and (\ref{eq:locvar}), and where we write $\widehat{z}_i = \Phi^{-1}(\widehat F_i(x_i))$. Moreover, the second factor in (\ref{eq:backtrans}) requires estimates $\widehat f_i(x_i)$ of the marginal densities $f_i(x_i)$, $i=1,\ldots,k$. As we will see in the next section, this can be any smooth estimate, and will not affect the asymptotic results as long as they converge faster than $\sqrt{nh^2}$. The current implementation of the LGDE uses the logspline estimator by \cite{stone1997polynomial} for this purpose. It is interesting to note that the computation resulting in (\ref{eq:fraction}), (\ref{eq:locmean}) and (\ref{eq:locvar}) can be done directly on estimated quantities using results on fractions of exponential functions.  

We modify the LGDE algorithm in \cite{otneim2016locally} according to the discussion above, and estimate conditional densities by following these steps:
\begin{itemize}
\item[1.] Transform each marginal observation vector to pseudo-standard normality using (\ref{eq:pseudo}).
\item[2.] Estimate the local correlation matrix of the transformed data by fitting the Gaussian family (\ref{eq:gaussfam}) using the local likelihood function in (\ref{eq:loclik}) and the simplification (\ref{eq:simplification}). In practice, this amounts to fitting the bivariate version of (\ref{eq:gaussfam}) to each pair of approximately marginally standard normal variables $(\widehat{Z}_i, \widehat{Z}_j)$, and let $\widehat{\vec{R}}(\vec{z}) = \{\widehat{\rho}(z_i, z_j)\}_{i,j = 1,\ldots,p}$.
\item[3.] Calculate the local mean and covariance matrix of $\widehat f_{\vec{Z}}/\widehat f_{\vec{Z}_2}$ using the formulas (\ref{eq:locmean}) and (\ref{eq:locvar}), so that the conditional density estimate becomes as given in (\ref{eq:backtrans})

\item[4.] Normalize the density estimate so that it integrates to one.
\end{itemize}
Again, we point out that our simplification of the dependence structure (\ref{eq:simplification}) in general will result in an estimate of an approximation $f_0(\cdot)$ of the true density $f(\cdot)$. We proceed in the next section to discuss the nature of the simplification, to discuss regularity conditions, and to explore the large sample properties of our method. 

\section{Regularity conditions and asymptotic theory}
\label{sec:asymp}
The following theorems on consistency relative to $f_0$ and asymptotic normality state analogous results to those found in \cite{otneim2016locally}, but they are proven under a new set of regularity conditions that allow for dependence between the observations $X_1, \ldots, X_n$. 

The simplification (\ref{eq:simplification}) means that we estimate the local correlations pairwise, which also means that it suffices to derive most of the asymptotic theory in the bivariate case. Consider, for the time being, a pair $(Z_i, Z_j)$ of marginally standard normal random variables. Denote by $\rho_0(z_i,z_j)=\rho_0(\vec{z})$ the local Gaussian correlation between them, as will be defined below, and by $\widehat\rho(\vec{z})$ its estimate, calculated using the bandwidths $\vec{h}=(h_i,h_j)$ according to the algorithm in Section \ref{sec:estimation}. Denote further by $L_n(\rho(\vec{z}),\vec{z})$ the local log-like\-li\-hood function in (\ref{eq:loclik}) with the bivariate version of (\ref{eq:gaussfam}) as parametric family $\psi(\cdot, \rho)$. For a fixed $\vec{h}>0$ (where all statements about the vector $\vec{h}$ in this section are element-wise), denote by $\rho_{\vec{h}}$ the local correlation that satisfies
\begin{equation}
\frac{\partial L_n(\vec{\rho}; \vec{z})}{\partial \rho} \rightarrow \int K_h(\vec{y} - \vec{z})u(\vec{y}, \rho_{\vec{h}})\left\{ f_{ij}(\vec{y}) - \psi(\vec{y}, \rho_{\vec{h}})\right\}\,\textrm{d}\vec{y}  = 0 \label{eq:limit}
\end{equation} 
as $n\rightarrow\infty$, where $u(\cdot,\rho) = \partial\log\psi(\cdot,\rho)/\partial\rho$, and $f_{ij}$ is the joint density of $(Z_i,Z_j)$. We assume hereafter that $\rho_{\vec{h}}$ exists and is unique for any $\vec{h}>0$ (see also \cite{hjort1996locally} and discussion in \cite{otneim2016locally}). By letting $\vec{h}=\vec{h}_n\rightarrow0$, at an appropriate rate (see Assumption \ref{ass:rate}), the local correlation in the expression above, as mentioned in the previous section, satisfies 
\begin{equation}
\label{eq:population}
\psi(\vec{z}, \rho_0(\vec{z})) = f_{ij}(\vec{z}),
\end{equation} 
and we require the population value $\rho_0(\vec{z})$ to satisfy (\ref{eq:population}), cf. \cite{hjort1996locally} and \cite{tjostheim2013local}. Assuming (\ref{eq:population}) is not enough to ensure uniqueness of $\rho_0$ just by itself, though, even in our restricted case with $f_{ij}$ having standard normal margins, and the expectations and standard deviations of $\psi(\cdot, \rho)$ being equal to zero and one respectively. Consider for example the case where $f_{ij}$ is the bivariate Gaussian distribution with correlation coefficient $\rho^*\neq0$. It is obvious that $\rho_0(\vec{z})=\rho^*$ is the population parameter, but in the point $\vec{z}=\vec{0}$, we see that $\rho_0 = -\rho^*$ also satisfies (\ref{eq:population}). In this and more general situations, such problems are avoided by approximating with a Gaussian in successively smaller neighborhoods. We must therefore make the following assumption that guarantees a well defined population parameter at the point $\vec{z}$:
\begin{ass}\label{ass:existence}
For any sequence $\vec{h}_n$ tending to zero as $n\rightarrow\infty$ there exists for the bivariate marginally standard Gaussian vector $(Z_i, Z_j)$ a unique $\rho_{\vec{h}_n}(\vec{z})$ that satisfies (\ref{eq:limit}), and there exists a $\rho_0(\vec{z})$ such that $\rho_{\vec{h}_n}\rightarrow \rho_0(\vec{z})$.
\end{ass} 

See \cite{tjostheim2013local} for a discussion of Assumption \ref{ass:existence}, and see \cite{berentsen2016some} for a discussion of an alternative neighborhood-free approach to defining the population parameter by means of matching the partial derivatives of the locally Gaussian approximation with the true underlying density function. Assumption \ref{ass:existence} essentially ensures that we estimate the joint densities of each pair of transformed variables consistently, but the joint density $f_0(\vec{z}) = \Psi(\vec{z}, \vec{R}_0)$, where $\vec{R}_0 = \{\rho_{0,ij}(z_i, z_j)\}_{i<j}$, and $\Psi(\cdot, \vec{R})$ is the standardized multivariate Gaussian density function with correlation matrix $\vec{R}$, is not necessarily equal to the true density of the standardized variables, which we for simplicity denote by $f(\vec{z})$. For this to be true, $f(\vec{z})$ must be on the form 
\begin{equation} \label{eq:popform}
f(\vec{z}) = \Psi(\vec{z}, \vec{R}_0),
\end{equation} 
and this is a restriction of a general density because  the entire dependence structure must be contained in the pairwise correlation functions $\rho_{0, ij}(z_i, z_j)$, which is true for distributions with the Gaussian copula (for which the correlation functions are constant in \emph{all} directions), or a stepwise Gaussian distribution as described by \cite{tjostheim2013local}, but it is difficult (but not paramount for our estimation procedure) to find more analytic examples. 

The class of density functions satisfying (\ref{eq:popform}), $H(f_0)$ say, is much richer than the Gaussian case, however, and our performance in estimating a given unconditional density $f(\cdot)$ is clearly sensitive to the distance from $f(\cdot)$ to its best approximant in $H(f_0)$. 

Imposing a sparsity requirement like (\ref{eq:simplification}) can be viewed in one of two ways. First, as a modeling assumption that can be formally tested, and then discarded if the test should fail. On the other hand, it can be viewed as a simplification of reality that arises due to computational necessity, much like additivity in non-parametric regression as explained in Section \ref{sec:LGDE}. We focus on the latter interpretation, and so the method must therefore be judged first and foremost by its performance in practical situations, like those being presented in Section \ref{sec:examples}. We also refer to \cite{otneim2016locally} for comprehensive simulations and discussions.

Next, we introduce time series dependence. A strictly stationary series of stochastic variables $\{X_n\}, n=1,2,\ldots$ is said to be $\alpha$-mixing if $\alpha(m)\rightarrow0$, where
\begin{equation}\label{eq:mixing}
\alpha(m) = \sup_{A\in\mathcal{F}_{-\infty}^0, B\in \mathcal{F}_m^{\infty}}|P(A)P(B) - P(AB)|,
\end{equation}
and where $\mathcal{F}_i^j$ is the $\sigma$-algebra generated by $\{X_m, i\leq m\leq j\}$ \citep[p. 68]{fan2003nonlinear}. We require the mixing coefficients (\ref{eq:mixing}) of our observations to tend to zero at an appropriate rate, which means that we can turn to standard theorems in order to establish the asymptotic properties of our estimator.   
\begin{ass}\label{ass:alpha}
For each pair $(i,j)$, $1\leq i\leq p$, $1\leq j\leq p$, $i\neq j$, $\{(Z_i, Z_j)\}_n$ is $\alpha$-mixing with the mixing coefficients satisfying $\sum_{m\geq1}m^{\lambda}\alpha(m)^{1-2/\delta}<\infty$ for some $\lambda>1-2/\delta$ and $\delta > 2$.
\end{ass}
The next assumption links allowable bandwidth rates with the mixing rate:
\begin{ass}\label{ass:rate}
$n\rightarrow\infty$, and each of the bandwidths $h$ tend to zero such that $nh^{\frac{\lambda+2-2/\delta}{\lambda+2/\delta}}=O(n^{\epsilon_0})$ for some constant $\epsilon_0>0$. 
\end{ass}
In the current context $\{(Z_i, Z_j)\}_n$ is a bivariate process with standard normal margins. In the statement of Theorem \ref{thm:density}, Assumption \ref{ass:alpha} means that the general $p$-variate observations $\{\vec{X}_n\}$ are $\alpha$-mixing with the specified convergence rate for the mixing coefficients. This distinction has no practical importance when transforming back and forth between these two scales, because the mixing properties of a process are conserved under any measurable transformation \citep[p. 69]{fan2003nonlinear}. 

We need a compact parameter space and some regularity conditions on the kernel function in order to prove consistency and asymptotic normality for the local correlations:
\begin{ass} \label{ass:compactness}
The parameter space $\Theta$ for $\rho$ is a compact subset of $(-1,1)$.
\end{ass}
\begin{ass}
\label{ass:kernel}
The kernel function satisfies $\sup_{\vec{z}}|K(\vec{z})|\allowbreak <\infty$, $\int |K(\vec{y})|\,\textrm{d}\vec{y}\allowbreak<\infty$, $\partial/\partial z_i K(\vec{z})<\infty$ and $\lim_{z_{i}\rightarrow\infty}\allowbreak|z_iK(z_i)| = 0$ for $i=1,2$.
\end{ass}
\begin{thm} \label{thm:consistency} 
Let $\{(Z_i, Z_j)\}_n$ be identically distributed bivariate stochastic vectors with standard normal margins. Denote by $\rho_0(\vec{z})$ the local Gaussian correlation between $Z_i$ and $Z_j$, and by $\widehat \rho_n(\vec{z})$ its local likelihood estimate. Then, under assumptions \ref{ass:existence}-\ref{ass:kernel}, $\widehat\rho_n(\vec{z}) \stackrel{P}{\rightarrow}\rho_0(\vec{z})$ as $n\rightarrow\infty$.
\end{thm}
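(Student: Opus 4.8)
The plan is to treat $\widehat\rho_n(\vec{z})$ as a $Z$-estimator, i.e.\ as a zero of the empirical local score $M_n(\rho):=\partial L_n(\rho;\vec{z})/\partial\rho$, and to establish consistency in two stages. First I would isolate the deterministic drift: Assumption~\ref{ass:existence} already guarantees that the finite-bandwidth population parameter $\rho_{\vec{h}_n}(\vec{z})$ converges to $\rho_0(\vec{z})$ as $\vec{h}_n\to0$. Hence, by the triangle inequality, it suffices to prove the stochastic statement $|\widehat\rho_n(\vec{z})-\rho_{\vec{h}_n}(\vec{z})|\stackrel{P}{\to}0$, reducing everything to showing that the empirical root of $M_n$ tracks the root $\rho_{\vec{h}_n}$ of its limiting version.

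The heart of the argument is a uniform law of large numbers for the kernel-weighted score over the compact parameter set $\Theta$ (Assumption~\ref{ass:compactness}). Writing $M_n(\rho)=n^{-1}\sum_{i} K_{\vec{h}_n}(\vec{Z}_i-\vec{z})\,u(\vec{Z}_i,\rho)-\int K_{\vec{h}_n}(\vec{y}-\vec{z})u(\vec{y},\rho)\psi(\vec{y},\rho)\,\mathrm{d}\vec{y}$, stationarity makes $M_n(\rho)$ exactly unbiased for the population score $M_{\vec h_n}(\rho):=\int K_{\vec h_n}(\vec y-\vec z)u(\vec y,\rho)\{f_{ij}(\vec y)-\psi(\vec y,\rho)\}\,\mathrm{d}\vec y$ appearing in (\ref{eq:limit}), whose unique zero is $\rho_{\vec h_n}$. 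It therefore remains only to drive $\mathrm{Var}(M_n(\rho))$ to zero. I would split the variance into a diagonal contribution of order $(nh^2)^{-1}$ and an off-diagonal sum of covariances across time lags; the latter is the genuinely new ingredient under dependence. I would bound each $\mathrm{Cov}\big(K_{\vec h_n}(\vec Z_i-\vec z)u(\vec Z_i,\rho),K_{\vec h_n}(\vec Z_j-\vec z)u(\vec Z_j,\rho)\big)$ by Davydov's covariance inequality, using the $\delta>2$ moment in Assumption~\ref{ass:alpha} to absorb the unboundedness of the Gaussian score $u(\cdot,\rho)$, and sum the lags against the mixing coefficients. The summability $\sum_m m^{\lambda}\alpha(m)^{1-2/\delta}<\infty$ paired with the bandwidth rate in Assumption~\ref{ass:rate} is exactly calibrated so that the off-diagonal mass vanishes, giving $M_n(\rho)\stackrel{P}{\to}M_{\vec h_n}(\rho)$ pointwise. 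Upgrading to uniformity over $\Theta$ follows from stochastic equicontinuity: the regularity of the kernel (Assumption~\ref{ass:kernel}) together with smoothness of $\rho\mapsto u(\vec y,\rho)$ on the compact $\Theta$ lets me control increments $M_n(\rho)-M_n(\rho')$ and conclude via a standard bracketing argument.

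With uniform convergence in hand, the final step is the usual $Z$-estimator consistency argument. Since $\rho_{\vec h_n}$ is by Assumption~\ref{ass:existence} the unique zero of the limiting score $M_{\vec h_n}$, continuity on the compact $\Theta$ makes that zero well separated, so that $M_{\vec h_n}(\rho)$ stays bounded away from zero outside any neighborhood of $\rho_{\vec h_n}$. Then the zero $\widehat\rho_n$ of the uniformly close empirical score $M_n$ must fall inside that neighborhood with probability tending to one, giving $\widehat\rho_n(\vec z)\stackrel{P}{\to}\rho_{\vec h_n}(\vec z)$; combined with the deterministic convergence $\rho_{\vec h_n}(\vec z)\to\rho_0(\vec z)$ this completes the proof.

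I expect the main obstacle to be the variance control under mixing, specifically the bookkeeping that matches the moment and mixing exponents of Assumption~\ref{ass:alpha} to the admissible bandwidth decay of Assumption~\ref{ass:rate}. Because the kernel-weighted summands are unbounded and grow in magnitude near the boundary of $\Theta$, Davydov's inequality must be applied with careful truncation, and one must track how the $(nh^2)^{-1}$ diagonal rate trades off against the off-diagonal covariance sum; the two assumptions are engineered to make this trade-off close, but verifying it rigorously is the delicate part. The passage from pointwise to uniform convergence is by comparison secondary, handled by compactness of $\Theta$ and smoothness of the score in $\rho$.
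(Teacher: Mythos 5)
Your overall two-stage skeleton (a stochastic part, $\widehat\rho_n-\rho_{\vec{h}_n}\stackrel{P}{\to}0$, plus the deterministic drift $\rho_{\vec{h}_n}\to\rho_0$ supplied by Assumption \ref{ass:existence}) matches the paper's, but your technical route is genuinely different. The paper argues on the objective function itself: it establishes uniform convergence in probability of the local log-likelihood $L_n(\rho,\vec{z})$ to its expectation $Q_{\vec{h}_n,K}(\rho)$ by invoking Corollary 2.2 of \cite{newey1991uniform} (pointwise convergence plus equicontinuity and Lipschitz continuity of the family $\{Q_{\vec{h}_n,K}\}$), gets the pointwise law of large numbers under $\alpha$-mixing from Theorem 1 of \cite{irle1997consistency}, and then concludes with the argmax-consistency argument of \citet[pp.~105--107]{severini2000likelihood}. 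You instead treat $\widehat\rho_n$ as a $Z$-estimator and prove a uniform law of large numbers for the score $M_n(\rho)=\partial L_n/\partial\rho$ by hand: exact unbiasedness under identical distribution, a diagonal variance term of order $(nh^2)^{-1}$, Davydov bounds on the lag covariances summed against Assumption \ref{ass:alpha}, and bracketing for uniformity over the compact $\Theta$. This is a legitimate alternative, and your covariance bookkeeping in fact mirrors what the paper does not here but in the proof of Theorem \ref{thm:anorm}, via Proposition 2.5(i) of \cite{fan2003nonlinear}; what your route loses in brevity it gains in self-containedness, making the moment/mixing/bandwidth trade-off explicit rather than outsourced to cited theorems.

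There is, however, one step that fails as stated: the well-separation claim. You assert that, since $\rho_{\vec{h}_n}$ is the unique zero of the population score $M_{\vec{h}_n}$ and $\Theta$ is compact, $M_{\vec{h}_n}$ ``stays bounded away from zero outside any neighborhood of $\rho_{\vec{h}_n}$.'' For a single fixed continuous function this is immediate, but $M_{\vec{h}_n}$ changes with $n$, and your root-tracking argument needs the lower bound to hold uniformly in $n$; Assumption \ref{ass:existence} gives uniqueness for each $n$, not uniform separation. The paper's own example shows the separation can degrade: for bivariate Gaussian data with correlation $\rho^*$, at the point $\vec{z}=\vec{0}$ both $\rho^*$ and $-\rho^*$ solve the limiting identification equation (\ref{eq:population}), so $M_{\vec{h}_n}(-\rho^*)\to u(\vec{0},-\rho^*)\left\{f_{ij}(\vec{0})-\psi(\vec{0},-\rho^*)\right\}=0$, and the infimum of $|M_{\vec{h}_n}|$ outside a fixed neighborhood of $\rho_{\vec{h}_n}$ tends to zero as $n\to\infty$. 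Consequently, uniform $o_P(1)$ convergence of $M_n$ to $M_{\vec{h}_n}$ does not by itself prevent the empirical root from landing near the spurious one; you would need a rate comparison (the stochastic error being $o_P$ of the shrinking separation gap) or a strengthened identification hypothesis. In fairness, the paper's argmax route faces the analogous requirement --- Severini's argument needs an $n$-uniformly well-separated maximizer of $Q_{\vec{h}_n,K}$, which the paper inherits from Assumption \ref{ass:existence} and \cite{otneim2016locally} without further comment --- but your write-up makes the unsupported claim explicit, so this step must be repaired or conditioned on the same implicit strengthening before the proof is complete.
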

\begin{proof}
See Appendix \ref{app:proofconsistency}.
\end{proof}

\citet[pp. 76-77]{fan2003nonlinear} provide a general central limit theorem for non-parametric regression. It is applicable to the local correlations, with obvious adaptations in order to achieve consistent notation. Assume now that $\{\vec{Z}_n\}$ is a sequence of $p$-variate observations having standard normal margins, and denote by $\vec{\rho} = (\rho_1, \ldots,\allowbreak \rho_{p(p-1)/2})$ the vector of local correlations, which has one component for each pair of variables. The local correlations are estimated one by one using the scheme described above, and denote by $\widehat{\vec{\rho}}$ the estimate of $\vec{\rho}$. Further, as all bandwidths are assumed to tend to zero at the same rate, statements like $h^2$ are taken to mean the product of any two bandwidths $h_i$ and $h_j$.

The local correlation estimates are then jointly asymptotically normal:

\begin{thm} \label{thm:anorm}
Under assumptions \ref{ass:existence}-\ref{ass:kernel},
$$\sqrt{nh_n^2}\left(\widehat{\vec{\rho}}_n - \vec{\rho}_0\right) \stackrel{\mathcal{L}}{\rightarrow} N(\vec{0}, \vec{\Sigma}),$$ 
where $\vec{\Sigma}$ is a diagonal matrix with components
$$\vec{\Sigma}^{(k,k)} = \frac{f_k(\vec{z}_k)\int K^2(\vec{y}_k)\,\textrm{d}\vec{y}_k}{u^2(\vec{z}_k,\rho_{0,k}(\vec{z}_k))\psi^2(\vec{z}_k,\rho_{0,k}(\vec{z}_k))},$$
where $k=1,\ldots,p(p-1)/2$ runs over all pairs of variables, $f_k$ is the corresponding bivariate marginal density of the pair $\vec{Z}_k$, $\psi(\cdot)$ is defined in (\ref{eq:gaussfam}) and $u(\cdot)$ is defined in the paragraph following equation (\ref{eq:limit}).
\end{thm}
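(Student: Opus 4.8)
The plan is to exploit the simplification~(\ref{eq:simplification}), which makes each component $\widehat\rho_k$ the solution of a purely bivariate local-likelihood score equation in its own pair $\vec{Z}_k$. I would therefore first establish a univariate central limit theorem for a single $\widehat\rho_k$, working on the transformed scale where the margins are standard normal and where Theorem~\ref{thm:consistency} already supplies $\widehat\rho_k \stackrel{P}{\rightarrow}\rho_{0,k}$, and then assemble the joint statement through the Cram\'er--Wold device. The only genuinely multivariate work will be the off-diagonal covariance computation at the very end.

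For a single pair, write $S_n(\rho) = \partial L_n(\rho;\vec{z})/\partial\rho$ for the bivariate score. Since $S_n(\widehat\rho)=0$, a one-term Taylor expansion around $\rho_0$ gives $\widehat\rho - \rho_0 = -S_n'(\tilde\rho)^{-1}S_n(\rho_0)$ for some intermediate $\tilde\rho$. Using consistency to replace $\tilde\rho$ by $\rho_0$, a law-of-large-numbers argument together with a direct computation (using $\partial_\rho\psi = u\psi$ and the pointwise identity $\psi(\vec{z},\rho_0)=f_k(\vec{z})$ of~(\ref{eq:population})) shows $S_n'(\rho_0)\stackrel{P}{\rightarrow}-u^2(\vec{z},\rho_0)\psi(\vec{z},\rho_0)$. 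The problem thus reduces to the asymptotics of the normalized score $\sqrt{nh^2}\,S_n(\rho_0)$.

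Here I would invoke the central limit theorem of \citet[pp.~76--77]{fan2003nonlinear}: the stochastic part of $S_n(\rho_0)$ is a kernel-weighted average $n^{-1}\sum_i K_{\vec{h}}(\vec{Z}_i-\vec{z})u(\vec{Z}_i,\rho_0)$ of a strictly stationary $\alpha$-mixing sequence, so the mixing condition (Assumption~\ref{ass:alpha}), the compactness condition (Assumption~\ref{ass:compactness}) and the kernel regularity (Assumption~\ref{ass:kernel}) are precisely the hypotheses of that theorem. A standard change of variables gives $\mathrm{Var}(\sqrt{nh^2}\,S_n(\rho_0)) \to u^2(\vec{z},\rho_0)f_k(\vec{z})\int K^2$, and feeding this and the Hessian limit through the delta method yields the marginal variance $f_k(\vec{z})\int K^2 / (u^2(\vec{z},\rho_0)\psi^2(\vec{z},\rho_0))$, matching $\vec{\Sigma}^{(k,k)}$. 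The rate in Assumption~\ref{ass:rate} is used to guarantee that the smoothing bias $\sqrt{nh^2}(\rho_{\vec{h}}-\rho_0)$ is asymptotically negligible, so that the limit is centered at zero.

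Finally, for the joint limit and the diagonal form of $\vec{\Sigma}$, I would apply Cram\'er--Wold and compute the limiting covariance between the scores of two distinct pairs. The key observation is that the product of the two bivariate kernels localizes in different coordinate planes: if the pairs share one variable the product behaves like $K^2$ in the common coordinate and an ordinary kernel in each of the others, so the relevant expectation scales like $h^{-1}$ rather than $h^{-2}$, and after the $nh^2$ normalization the cross term is $O(h)\to 0$; if the pairs are disjoint it is even smaller, $O(h^2)$. Hence every off-diagonal entry vanishes and $\vec{\Sigma}$ is diagonal. I expect the main obstacle to be the rigorous verification of the Fan--Yao conditions under dependence --- controlling the large-block/small-block covariance contributions to $\mathrm{Var}(\sqrt{nh^2}\,S_n)$ so that only the variance term survives --- together with the simultaneous bias/rate bookkeeping tied to Assumption~\ref{ass:rate}; the diagonality computation, while the new ingredient relative to the univariate case, is comparatively routine once the single-pair asymptotics are in hand.
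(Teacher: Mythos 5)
Your proposal follows essentially the same route as the paper's own proof: a Taylor expansion of the bivariate score (with the Hessian limit $-u^2\psi$ via the identity $\psi(\vec{z},\rho_0)=f_k(\vec{z})$) reduces the problem to asymptotic normality of the kernel-weighted sum $\sum_i K\left(|\vec{h}_n|^{-1}(\vec{Z}_i-\vec{z})\right)u(\vec{Z}_i,\rho_0)$, which is then handled under $\alpha$-mixing by the variance/covariance bounds and big-block--small-block technique of \cite{fan2003nonlinear}, followed by the Cram\'er--Wold device for the joint statement. The only difference is cosmetic: you spell out the kernel-overlap scaling argument for the vanishing off-diagonal covariances ($O(h)$ for pairs sharing a coordinate, $O(h^2)$ for disjoint pairs), a computation the paper simply defers to \cite{otneim2016locally}.
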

When comparing with the corresponding result in \cite{otneim2016locally}, we see that the mixing has no effect on the asymptotic covariance matrix compared with the iid case. See Appendix \ref{app:proofanorm} for proof. 

The preceding theorems lead up to the following asymptotic result for the locally Gaussian conditional density estimates, which is analogous to the corresponding result in \cite{otneim2016locally} in the unconditional case. Denote by $f_0(\vec{x}_1|\vec{X}_2 = \vec{x}_2)$ the locally Gaussian conditional density function of $\vec{X}_1|\vec{X}_2=\vec{x}_2$ (where $\vec{X} = (\vec{X}_1,\vec{X}_2)$ does not necessarily have standard normal marginals), which is obtained by replacing $f_{\vec{Z}}/f_{\vec{Z}_2}$ with $\Psi^*(\vec{z};\vec{\mu}^*_0, \vec{\Sigma}^*_0)$ in equation (\ref{eq:backtrans}). The parameters $\vec{\mu}^*_0$ and $\vec{\Sigma}^*_0$ are again obtained from formulas (\ref{eq:locmean}) and (\ref{eq:locvar}) using the population values of the local correlations as defined in Assumption \ref{ass:existence}. 

Following the algorithm in Section \ref{sec:estimation}, we must estimate the local Gaussian correlation for pairs of variables $\widehat{\vec{Z}}_n = \{(\widehat Z_i, \widehat Z_j)\}_n$ as defined in equation (\ref{eq:pseudo}), that are not exactly marginally standard normal, because the distribution functions $F_i(\cdot)$, $i=1,\ldots,p$ must be estimated from the data. In the same way as for the iid case in \cite{otneim2016locally}, we need some extra assumptions on the pairwise copulas between the components in $\vec{X}$ to ensure that using the empirical distribution distribution functions instead of the true distributions will not affect the asymptotic distribution of the LGDE conditional density estimate. The following assumptions are taken directly from \cite{geenens2014probit}, who derive the asymptotic properties of a local likelihood copula density estimator in the bivariate case, that is also based on transformations to marginal standard normality. 

\begin{ass} \label{ass:geenens1}
The marginal distribution functions $F_{1}, \ldots,F_{p}$ are strictly increasing on their support.
\end{ass}

\begin{ass} \label{ass:geenens2}
Each pairwise copula $C_{ij}$ of $(X_i,X_j)$ is such that $(\partial C_{ij}/\partial u)(u,v)$ and $(\partial^2 C_{ij}/\partial u^2)(u,v)$ exist and are continuous on $\{(u,v):u\in(0,1), v \in [0,1]\}$, and $(\partial C_{ij}/\partial v)(u,v)$ and $(\partial^2 C_{ij}/\partial v^2)(u,v)$ exist and are continuous on $\{(u,v):u\in[0,1], v \in (0,1)\}$. In addition, there are constants $K_i$ and $K_j$ such that
\begin{align*} 
\left|\frac{\partial^2C_{ij}}{\partial u^2}(u,v)\right| &\leq \frac{K_i}{u(1-u)} & \textrm{ for } (u,v) \in (0,1)\times[0,1], \\
\left|\frac{\partial^2C_{ij}}{\partial v^2}(u,v)\right| &\leq \frac{K_j}{v(1-v)} & \textrm{ for } (u,v) \in [0,1]\times(0,1).
\end{align*}
\end{ass}

\begin{ass} \label{ass:geenens3}
Each density $c_{i,j}$ of $C_{i,j}$ exists, is positive, and admits continuous partial derivatives to the fourth order on the interior of the unit square. In addition, there is a constant $K_{00}$ such that
$$
c(u,v) \leq K_{00}\min\left(\frac{1}{u(1-u)},\frac{1}{v(1-v)}\right)  \textrm{ for all }(u,v)\in(0,1)^2.
$$
\end{ass}  

These smoothness assumptions are quite weak, as can be seen from the discussion in \cite{geenens2014probit}. Finally, we need to assume that the final back-transformation of the density estimate converge faster than the nonparametric rate of $\sqrt{nh^2}$:

\begin{ass} \label{ass:backtrans}
The estimates of the marginal densities and quantile functions that are used for the back-transformations in (\ref{eq:backtrans}), are asymptotically normal with convergence rates faster than $\sqrt{nh^2}$.
\end{ass}
As we use the logspline-estimator \citep{stone1997polynomial} for the back-transformations in all our examples, we discuss its large sample properties in light of assumption \ref{ass:backtrans} in Appendix \ref{app:logspline}. Another possible candidate is the basic univariate kernel estimator, which, under some regularity conditions, converges as $\sqrt{nh}$.

\begin{thm} \label{thm:density}
Let $\{\vec{X}_n\}$ be a strictly stationary process with density function $f_{\vec{X}}(\vec{x})$. Partition $\vec{X}$ into $\vec{X}_1 = (X_1, \ldots, X_k)$ and $\vec{X}_2 = (X_{k+1},\ldots,X_p)$, and let $\widehat f_{\vec{X}_1|\vec{X}_2}\allowbreak(\vec{x}_1|\vec{X}_2 = \vec{x_2})$ be the estimate of the conditional density $f_{\vec{X}_1|\vec{X}_2}$ that is obtained using the procedure in Section \ref{sec:estimation}. Then, under assumptions \ref{ass:existence}-\ref{ass:backtrans}, 
\begin{align*}
&\sqrt{nh_n^2}\left(\widehat f_{\vec{X}_1|\vec{X}_2}(\vec{x}_1|\vec{X}_2 = \vec{x_2}) - f_0(\vec{x}_1|\vec{X}_2 = \vec{x_2})\right) \\
& \qquad\qquad\qquad \stackrel{\mathcal{L}}{\rightarrow} N \big( 0, \psi^*(\vec{z};\vec{\mu}^*_0, \vec{\Sigma}^*_0)^2g(\vec{x})^2 \vec{u}^T(\vec{z}; \vec{\mu}^*_0, \vec{\Sigma}^*_0)\, \vec{\Sigma}\, \vec{u}(\vec{z}; \vec{\mu}^*_0, \vec{\Sigma}^*_0),
\end{align*}
where
\begin{align*}
g(\vec{x}) &= \prod_{i=1}^k f_i(x_i)/\phi(z_i), \\
\vec{z} &= \{z_i\}_{i=1,\ldots,p} = \{\Phi^{-1}(F_i(x_i))\}_{i=1,\ldots,p},
\end{align*}
and $\vec{u}(\vec{z}) = \nabla \log\psi^*(\vec{z}, \vec{\mu}^*_0, \vec{\Sigma}^*_0)$, where the gradient is taken with respect to the vector of local correlations.
\end{thm}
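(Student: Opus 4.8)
The plan is to realise the conditional density estimate (\ref{eq:backtrans}) as a smooth transformation of the vector of local correlation estimates $\widehat{\vec{\rho}}_n$, and then to transport the joint asymptotic normality of $\widehat{\vec{\rho}}_n$ established in Theorem~\ref{thm:anorm} through that transformation by the delta method. Writing the estimator as
\[
\widehat f_{\vec{X}_1|\vec{X}_2}(\vec{x}_1|\vec{X}_2=\vec{x}_2) = \Psi^*\!\left(\widehat{\vec{z}};\widehat{\vec{\mu}^*},\widehat{\vec{\Sigma}^*}\right)\,\widehat g(\vec{x}),
\]
with $\widehat g(\vec{x}) = \prod_{i=1}^k \widehat f_i(x_i)/\phi(\widehat z_i)$ and $\widehat{\vec{\mu}^*},\widehat{\vec{\Sigma}^*}$ obtained from $\widehat{\vec{\rho}}_n$ and $\widehat{\vec{z}}$ through (\ref{eq:locmean})--(\ref{eq:locvar}), one sees three sources of randomness: the local correlations $\widehat{\vec{\rho}}_n$, the marginal back-transformation factor $\widehat g$, and the pseudo-observations $\widehat{\vec{z}}$ built from the empirical marginal distribution functions $\widehat F_i$. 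The strategy is to show that only the first contributes to the limiting variance at the rate $\sqrt{nh_n^2}$.

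First I would dispose of the marginal factor. By Assumption~\ref{ass:backtrans}, each $\widehat f_i$ and each estimated quantile converges faster than $\sqrt{nh_n^2}$, so $\widehat g(\vec{x}) - g(\vec{x}) = o_P((nh_n^2)^{-1/2})$ and $\widehat z_i - z_i = o_P((nh_n^2)^{-1/2})$; in particular $\widehat g(\vec{x}) \stackrel{P}{\rightarrow} g(\vec{x})$ and the evaluation point $\widehat{\vec{z}}$ may be replaced by $\vec{z}$ up to negligible terms. Next, the pseudo-observations enter the local likelihood through $\widehat F_i$ rather than the true $F_i$, whereas Theorem~\ref{thm:anorm} is stated for data with \emph{exactly} standard normal margins. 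Here I would invoke Assumptions~\ref{ass:geenens1}--\ref{ass:geenens3}, exactly as in \cite{geenens2014probit}, to argue that substituting $\widehat F_i$ for $F_i$ in the construction of $\widehat{\vec{\rho}}_n$ perturbs the estimate only by $o_P((nh_n^2)^{-1/2})$, so that the conclusion of Theorem~\ref{thm:anorm}, namely $\sqrt{nh_n^2}(\widehat{\vec{\rho}}_n - \vec{\rho}_0) \stackrel{\mathcal{L}}{\rightarrow} N(\vec{0},\vec{\Sigma})$, continues to hold for the feasible estimates.

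With the nuisance terms controlled, the core of the argument is a delta-method expansion of the map $G(\vec{\rho}) = \Psi^*(\vec{z};\vec{\mu}^*(\vec{\rho}),\vec{\Sigma}^*(\vec{\rho}))$, with the evaluation point $\vec{z}$ held fixed at its limiting value. Since $\vec{R}_{22}$ is a genuine positive definite correlation matrix at the population value, the operations in (\ref{eq:locmean})--(\ref{eq:locvar}) are continuously differentiable in a neighbourhood of $\vec{\rho}_0$, and $\Psi^*$ is smooth in its mean and covariance parameters; consistency (Theorem~\ref{thm:consistency}) guarantees that $\widehat{\vec{\rho}}_n$ eventually lies in this neighbourhood. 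Using $\nabla_{\vec{\rho}}\Psi^* = \Psi^*\nabla_{\vec{\rho}}\log\Psi^*$ and the definition $\vec{u} = \nabla\log\Psi^*$, the gradient is $\nabla_{\vec{\rho}} G(\vec{\rho}_0) = \Psi^*(\vec{z};\vec{\mu}^*_0,\vec{\Sigma}^*_0)\,\vec{u}(\vec{z};\vec{\mu}^*_0,\vec{\Sigma}^*_0)$. The delta method applied to Theorem~\ref{thm:anorm} then gives
\[
\sqrt{nh_n^2}\bigl(G(\widehat{\vec{\rho}}_n) - G(\vec{\rho}_0)\bigr) \stackrel{\mathcal{L}}{\rightarrow} N\!\left(0,\ \Psi^*(\vec{z};\vec{\mu}^*_0,\vec{\Sigma}^*_0)^2\,\vec{u}^T\vec{\Sigma}\,\vec{u}\right),
\]
and multiplying by the deterministic limit $g(\vec{x})$ via Slutsky's theorem, together with the negligibility of $\widehat g - g$ and $\widehat{\vec{z}} - \vec{z}$, produces the stated limiting variance $\Psi^*(\vec{z};\vec{\mu}^*_0,\vec{\Sigma}^*_0)^2 g(\vec{x})^2\,\vec{u}^T\vec{\Sigma}\,\vec{u}$.

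The main obstacle I anticipate is the second step: rigorously showing that replacing the true marginal distribution functions by their empirical counterparts leaves the $\sqrt{nh_n^2}$-asymptotics of the local correlations unchanged. This is the only place where the margins of the pseudo-observations deviate from exact standard normality, and controlling that deviation uniformly enough to be absorbed into an $o_P((nh_n^2)^{-1/2})$ remainder---while the correlations themselves are estimated at the slow nonparametric rate---is the delicate part, which is precisely why the boundary-sensitive smoothness conditions \ref{ass:geenens1}--\ref{ass:geenens3} on the pairwise copulas are imposed. By comparison, verifying the differentiability of $G$ and carrying out the delta-method bookkeeping are routine.
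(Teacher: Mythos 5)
Your overall strategy is the same as the paper's: dispose of the back-transformation factor $\widehat g$ and the evaluation points $\widehat{\vec{z}}$ via Assumption~\ref{ass:backtrans}, show that building the local correlations from pseudo-observations (empirical margins) does not alter their $\sqrt{nh_n^2}$-asymptotics, and then push Theorem~\ref{thm:anorm} through the multivariate delta method applied to $q(\vec{\rho}) = \Psi^*\bigl(\vec{z};\vec{\mu}^*(\vec{\rho}),\vec{\Sigma}^*(\vec{\rho})\bigr)g(\vec{x})$, whose gradient $\Psi^*\, g\, \vec{u}$ yields exactly the stated variance. The delta-method bookkeeping and the identification of the empirical-margins step as the crux both match the paper's proof.

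However, there is a genuine gap in how you propose to execute that crucial second step. You say you would handle the substitution of $\widehat F_i$ for $F_i$ ``exactly as in \cite{geenens2014probit}.'' The relevant result there (Proposition 3.1, as adapted in \cite{otneim2016locally}) is proved for \emph{independent} observations: its engine is the weak convergence of the empirical copula process for iid data. But Theorem~\ref{thm:density} concerns a strictly stationary, $\alpha$-mixing process (Assumption~\ref{ass:alpha}), and extending the iid theory to dependent data is precisely the point of this paper, so the iid empirical-process argument cannot be invoked as is. The paper closes this hole with one additional ingredient: the weak convergence of the empirical copula process continues to hold under $\alpha$-mixing by \cite{bucher2013empirical}, after which the modification of Geenens' Proposition 3.1 carries over unchanged. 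Without that (or an equivalent dependent-data empirical process result), your claim that the conclusion of Theorem~\ref{thm:anorm} holds for the feasible estimates built from empirical margins is not justified under the hypotheses of the theorem; with it, your argument coincides with the paper's.
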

See Appendix \ref{app:densityproof} for a proof.

\begin{figure*}[t]
\includegraphics[width = \textwidth]{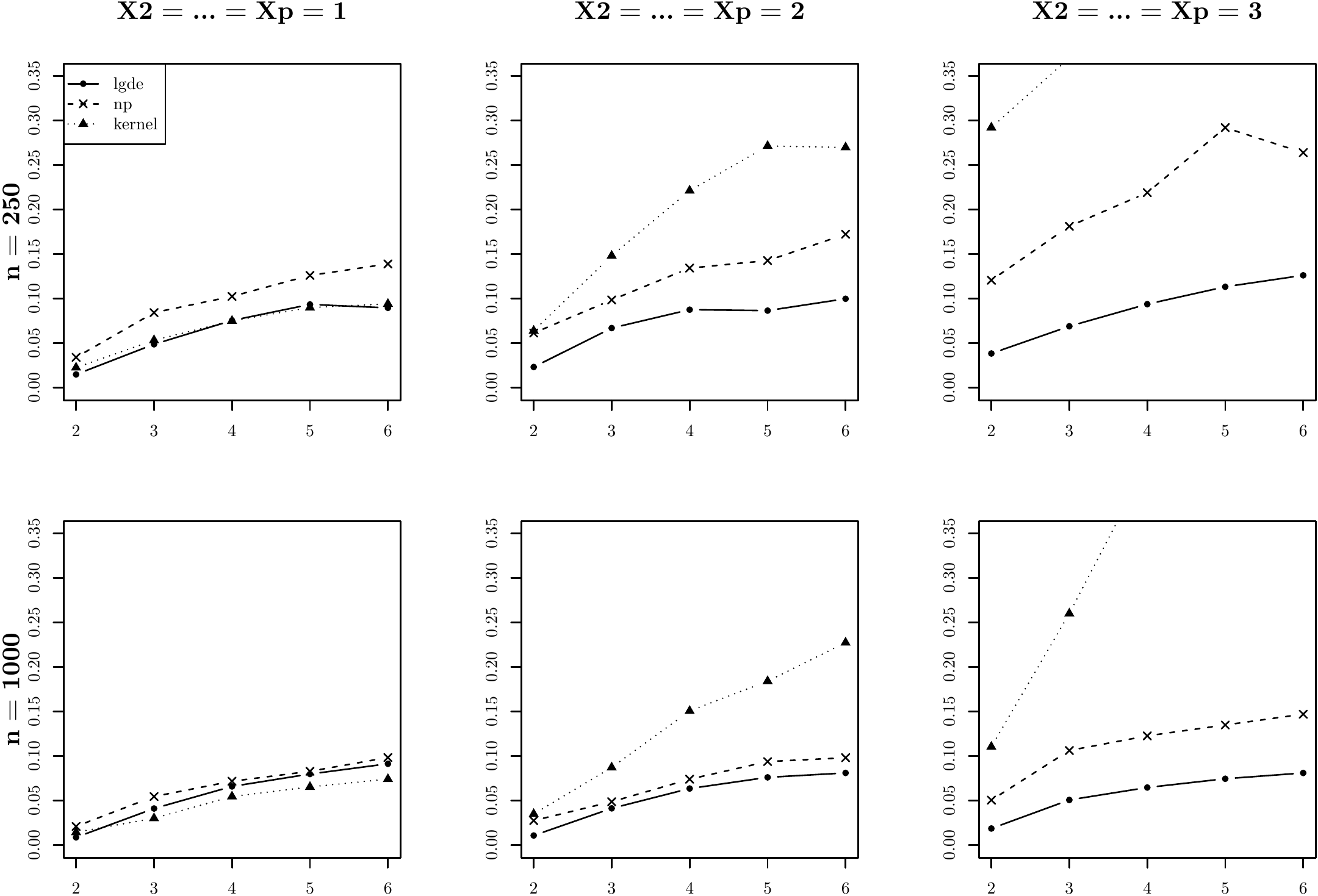}
\caption{The integrated squared error of conditional density estimates of $f_{X_1|X_2,\ldots,X_p}$ as a function of $p$, generated from a density with exponential margins and a Joe copula with Kendall's Tau equal to 0.6.}
\label{fig:expjoe}
\end{figure*}

\begin{figure*}[t]
\includegraphics[width = \textwidth]{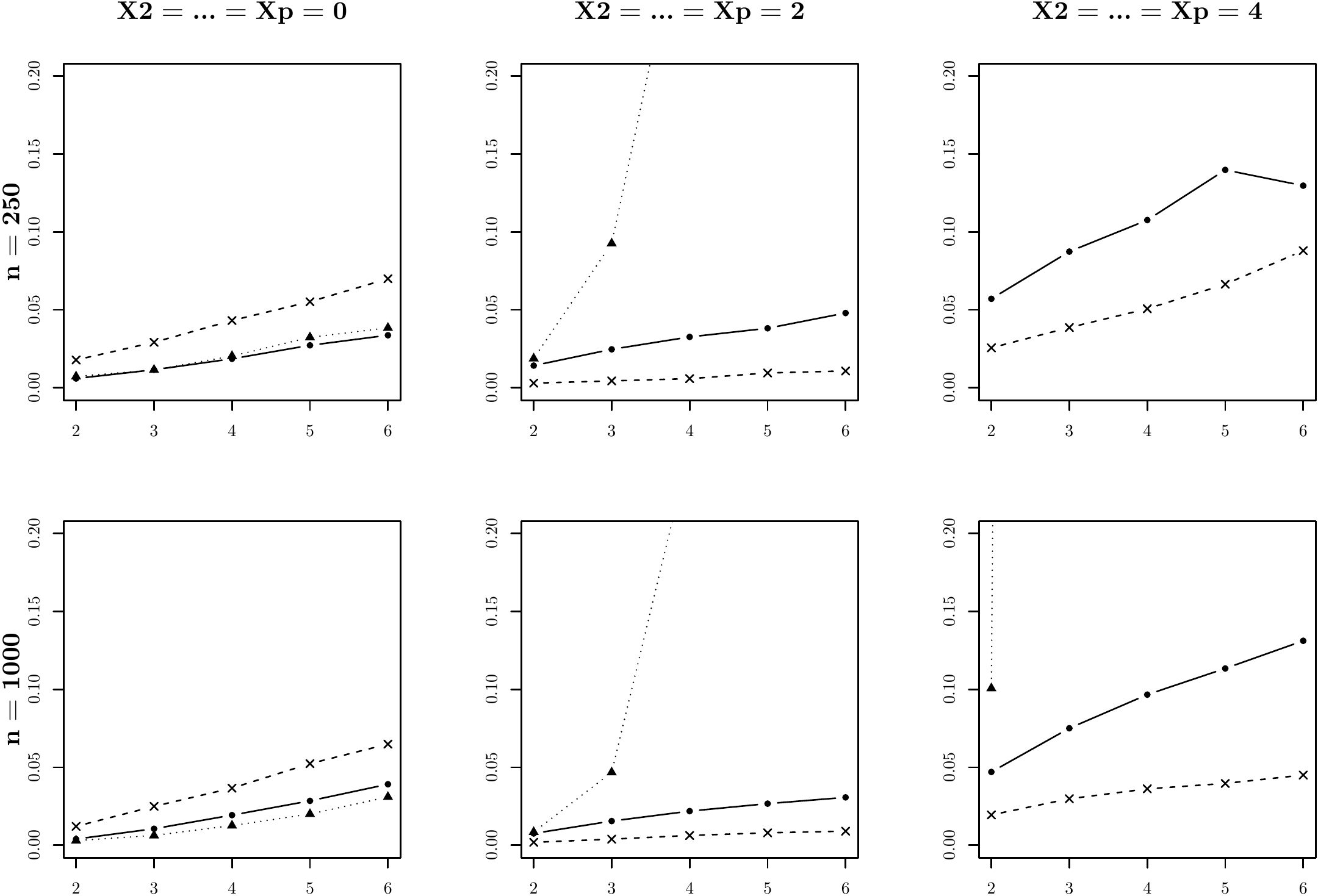}
\caption{The integrated squared error of conditional density estimates of $f_{X_1|X_2,\ldots,X_p}$ as a function of $p$, generated from the multivariate $t$-distribution with 4 degrees of freedom.}
\label{fig:t4}
\end{figure*}

\begin{figure*}[t]
\includegraphics[width = \textwidth]{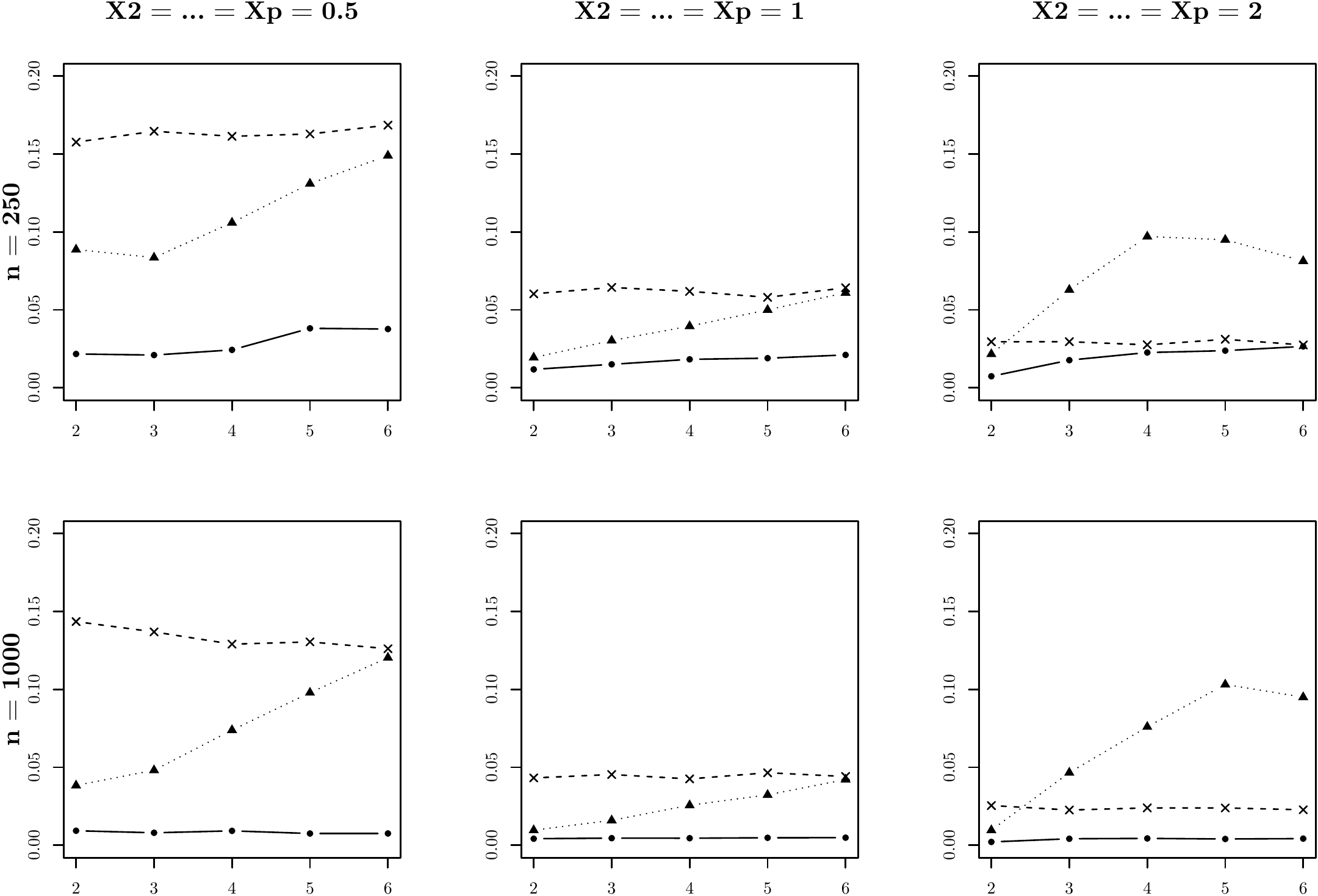}
\caption{The integrated squared error of conditional density estimates of $f_{X_1|X_2,\ldots,X_p}$ as a function of $p$, generated from a density in which the first two variables are marginally log-normal with a $t$(10)-copula, and the rest of the variables are multivariate $t$(5)-distributed, independently from $(X_1, X_2)$.}
\label{fig:lnormttindep}
\end{figure*}

\section{Examples} \label{sec:examples}
The asymptotic results of the preceding section will not give us the complete picture on how the LGDE estimator of conditional densities behaves in practice for a finite sample. We must also take into account that the simplification (\ref{eq:simplification}) of the dependence structure could introduce an approximation error in practical applications, the size of which depends on the problem at hand. We proceed to apply our new estimator to a series of problems using real and simulated data, and compare it with existing methods.

It is customary in the copula literature to generate pseudo-observations by means of the marginal empirical distribution functions, and this is why we can prove Theorem \ref{thm:density} by mostly referring to existing results. The back-transformation (\ref{eq:backtrans}) must be smooth and invertible, making a standard marginal kernel estimate a natural choice. Extensive testing, however, has revealed that we obtain better finite sample performance if we use the logspline method by \cite{stone1997polynomial} for marginal density and distribution estimates, not only in the back-transformation (\ref{eq:backtrans}), but also in generating the marginally Gaussian pseudo-observations (\ref{eq:pseudo}). The following examples, as well as the computer code that accompany this article as supplementary material, therefore use the logspline estimator for both of these purposes. We argue in Appendix \ref{app:logspline} that the asymptotic properties of the logspline estimator do not change when applied to $\alpha$-mixing data compared to independent data.    
 
\subsection{Conditional density estimation}
\label{subseq:densest}
\begin{figure*}[t]
\includegraphics[width = \textwidth]{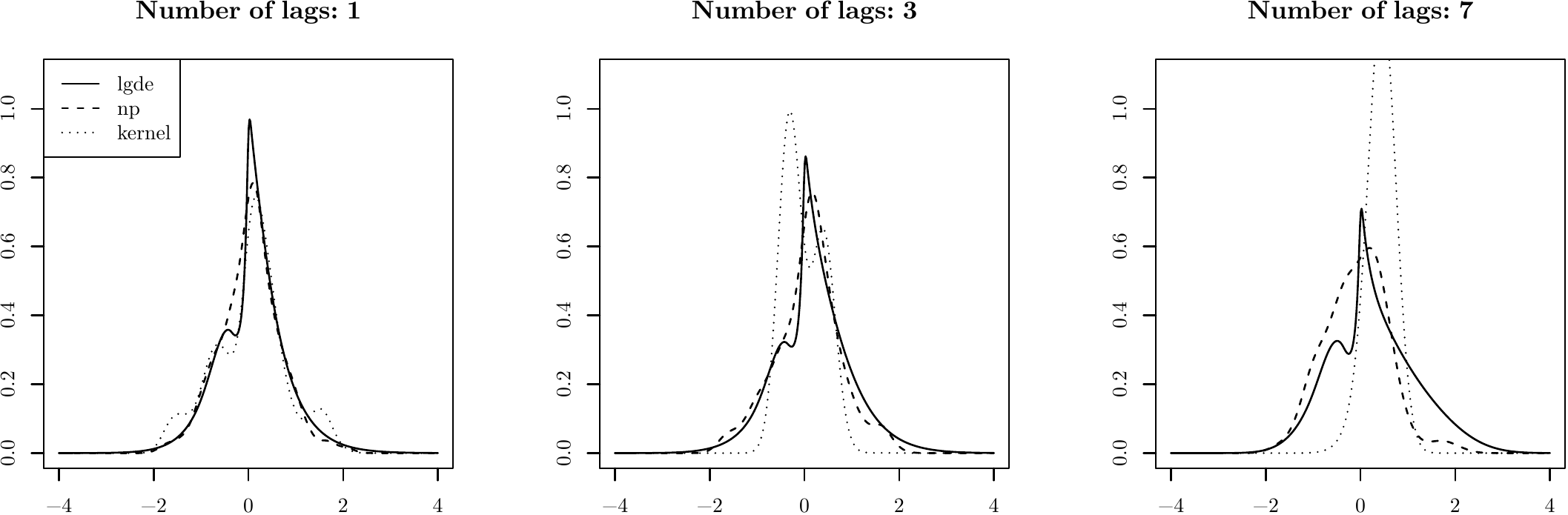}
\caption{Estimate of the conditional density of the US log-returns conditioned on $X_{t-1} = \cdots = X_{t-k} = -1$ with $k=1,3,7$ respectively.}
\label{fig:varsingle}
\end{figure*}

\subsubsection{Simulated data with relevant variables}
In this section, we wish to investigate the sensitivity of various methods with respect to the number of explanatory variables in the problem, and begin by presenting some simulation experiments in which we generate data from test distributions, measure the integrated squared error (ISE) of our conditional density estimate, and compare it with the two natural competitors which are readily available for implementation: the na\"{i}ve approach, where the numerator and denominator of (\ref{eq:cond}) are estimated separately using the multivariate kernel estimator with the plug-in bandwidth selector of \cite{wand1994multivariate}, and the specialized kernel method by \cite{li2007nonparametric}, which we denote by the name of the software package written in the R programming language \citep{R} from which it can be calculated: ``NP'' \citep{hayfield2008nonparametric}.  

The first test distribution has standard exponentially distributed margins, and the dependence structure is defined by the Joe copula (see e.g. \citet[p. 116, distribution 6]{nelsen2013introduction}) with parameter $\theta = 3.83$, which corresponds to a Kendall's Tau of 0.6 between all pairs of variables. For each dimension $p$, ranging from 2 to 6, we generate $2^7 = 128$ data sets, and estimate the conditional density of $X_1|X_2 = \cdots = X_p = c$, with $c$ being equal to 1,2 and 3 in this example. We calculate the ISE of the density estimates numerically over 2000 equally spaced grid points, and graph the mean of the estimated errors as a function of the dimension for two different sample sizes ($n=250$ and $n=1000$), see Figure \ref{fig:expjoe}. 

The basic kernel estimator performs well in the center of the distribution, especially in the example with sample size 1000. When we condition on values that are farther out in tail, however, it quickly deteriorates as the dimension increases. This behavior is of course expected because of the curse of dimensionality. The NP-estimator is clearly a major improvement to na\"{i}ve kernel estimation of conditional densities, but in this example we see that the LGDE approach is the overall best performer. It matches the purely non-parametric methods in lower-dimensional cases, but also boasts a greater robustness against increasing dimensionality than its competitors. The tail behavior of the LGDE is much better than the other two methods. It is governed by a Gaussian distribution, which again is determined locally by the behavior of $f_{X_1|X_2,\ldots,X_p}$ in the tail.

\subsubsection{Simulated data from a heavy-tailed distribution}
\cite{otneim2016locally} show that the unconditional version of the LGDE does not work very well when fitted to the heavy-tailed $t(4)$-distribution. The reason for this is not entirely clear, but one explanation is that the cross-validated bandwidths are too small. The conditional version of the LGDE also starts to struggle when presented with data from this distribution, as can be seen in Figure \ref{fig:t4}. It is expected that using the $t$-distribution in the same pairwise and local manner as we use the Gaussian distribution here, will improve this fit, and we discuss this more closely in Section \ref{sec:conclusion}. The conditional density estimator by \cite{li2007nonparametric} is the best alternative in this case if the explanatory variables are not in the center of the distribution. 

\subsubsection{Simulated data with irrelevant variables}
One challenge in estimating conditional densities is to discover, and take account of, independence between variables. We have not addressed this problem explicitly in the derivation of our estimator, contrary to the NP-estimator by \cite{li2007nonparametric}, which smooths irrelevant variables away automatically. In our next example, however, most of the explanatory variables are independent from the response variable, but they are mutually dependent themselves. In the two-dimensional case with $\vec{X} = (X_1, X_2)$, we generate data from a bivariate distribution with log-normal margins that has been assembled using the $t$-copula with 10 degrees of freedom. For all dimensions greater than two, the remaining variables $X_3, \ldots, X_p$ are drawn from a multivariate $t$-distribution with 5 degrees of freedom, but independent from $(X_1, X_2)$. 

It turns out that our approach handles this case very well, see Figure \ref{fig:lnormttindep}. None of the methods have errors that grow sharply with the dimension, which indicate that they more or less ignore the extra noise that the extra dimensions contains. The LGDE-method is clearly the best, however, according to this particular choice of error measure. The explanation for this is the equivalence between independence and the local correlation being equal to zero between marginally Gaussian variables, which in turn means that, by construction, variables that are independent from the response variable will have very little influence in the final conditional density estimate. 

\subsubsection{Real data with irrelevant variables}\label{subsubsec:irr}
We can explore this property using a real data set as well. Consider a subset of the data set which is also analyzed in \cite{otneim2016locally} comprising daily log-returns on the S\&P 500 stock index observed on 1443 days from January 3rd 2005 until July 14th, 2010. In this example we will use only the first 500 observations, so the financial crisis of 2008 is not included in this particular analysis. 

We know that there is very little extra information given the first lag in this time series, thus estimating the marginal density of these log returns by conditioning on more and more lags will not introduce more information, but rather noise, that should ideally be ignored by the estimation routine. 

Figure \ref{fig:varsingle} displays the marginal density estimates of the data, calculated using the three competing methods and conditioned on the preceding 1, 3 and 7 days' values respectively being equal to $-1$. All methods perform similarly in the first case in which we condition on only one variable. In the second panel we condition on three lags, which amounts to a four dimensional problem in terms of density estimation, and the na\"{i}ve kernel estimator, not surprisingly, struggles in this case. The other two methods, however, the NP and the LGDE, remain largely unchanged, which indicates that they, for the most part, ignore the additional two variables of data. When conditioning on 7 lags, the kernel estimator should not be trusted. The NP-estimator also appears to loose some characteristics, like the sharpness of its peak and the fatness of its right tail. The LGDE, on the other hand, seems to be the better performer in this case. Although the estimate is slightly deformed compared to the other two figures, its main characteristics are conserved. The tails in particular shows great robustness compared to the other two methods, and we believe that this behavior to a large part explains its good performance in simulation experiments, and we will also exploit this feature in Section \ref{subsec:var}.   

\subsubsection{Melbourne temperature data: comparison with local polynomials}
\begin{figure*}[!t]
\includegraphics[width = \textwidth]{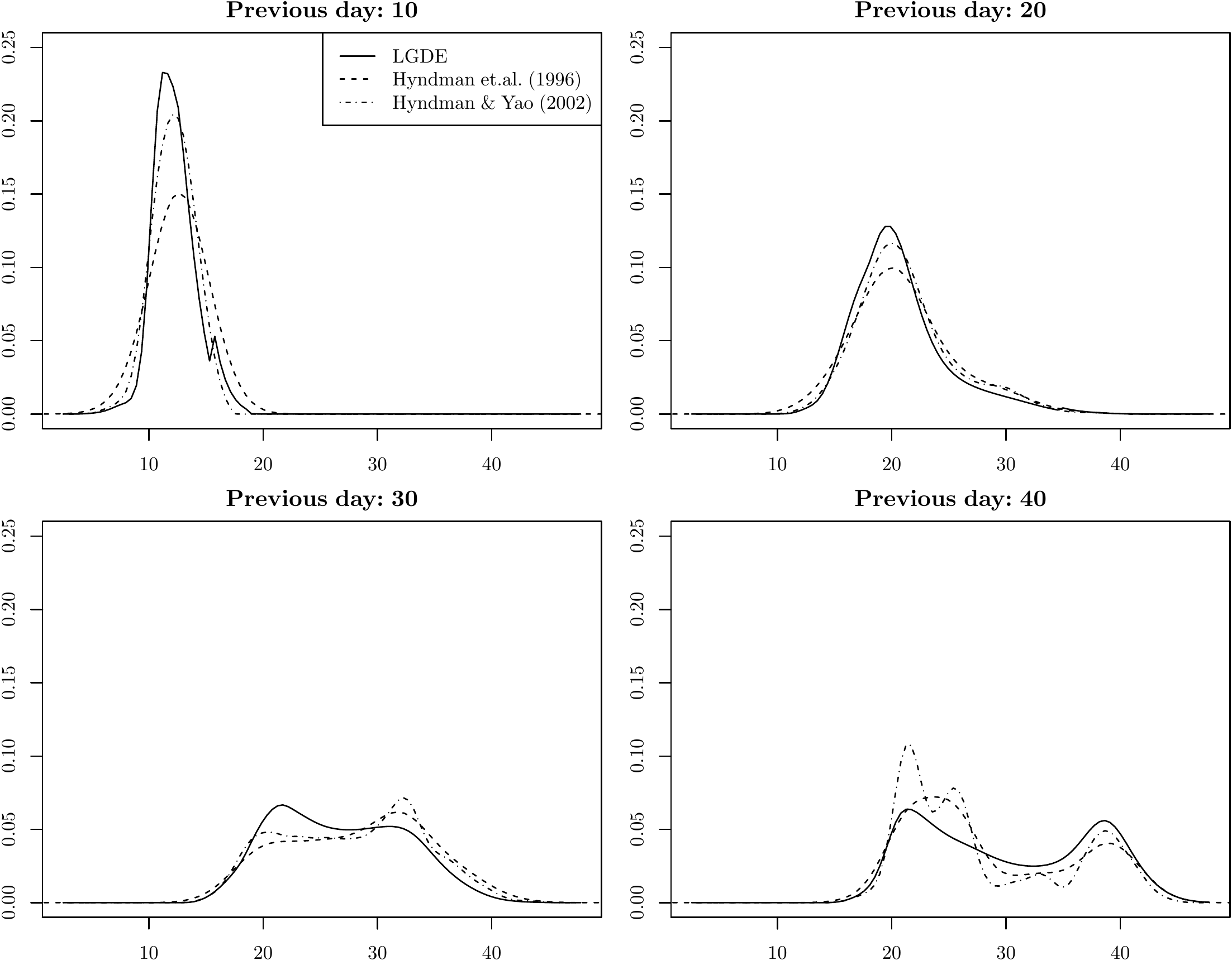}
\caption{Australian temperature data, with estimated conditional density of the maximum daily air temperature, given a preceding recording of 10, 20, 30 and 40 degrees Celsius respectively.}
\label{tempdata}
\end{figure*}
The local polynomial conditional density estimators of \cite{hyndman1996estimating} and \cite{hyndman2002nonparametric} is in its current implementation restricted to the case where the explanatory and response variables are both scalar, and is therefore not included in the simulation experiments of the preceding subsection. We will, however, compare these estimators to our approach using the Melbourne temperature data that is presented by \cite{hyndman1996estimating}. The data consists of daily recordings of the maximum air temperature in Melbourne, Australia from 1981 until 1990. It is known that a low maximum temperature one day most often results in a similar temperature the next day. Local meteorological conditions, however, have the effect that a high maximum temperature is often followed by either a large, or a much smaller observation, making the corresponding conditional density bimodal. The \cite{hyndman1996estimating}-estimator, which in this example is a local polynomial of order zero, recovers this phenomenon nicely, and although our locally Gaussian estimator is not identical, it gives a similar picture, see Figure \ref{tempdata}. The \cite{hyndman2002nonparametric}-estimator is a locally quadratic polynomial, and mostly agrees with the other methods, but seems to be slightly overfitting the density in the lower right panel. 

It is interesting to note that the bimodality of the LGDE-estimator is mirrored compared with the local polynomials in the lower left panel. 

\subsection{Partial correlation and covariance}

\begin{figure*}[t]
\includegraphics[width = \textwidth]{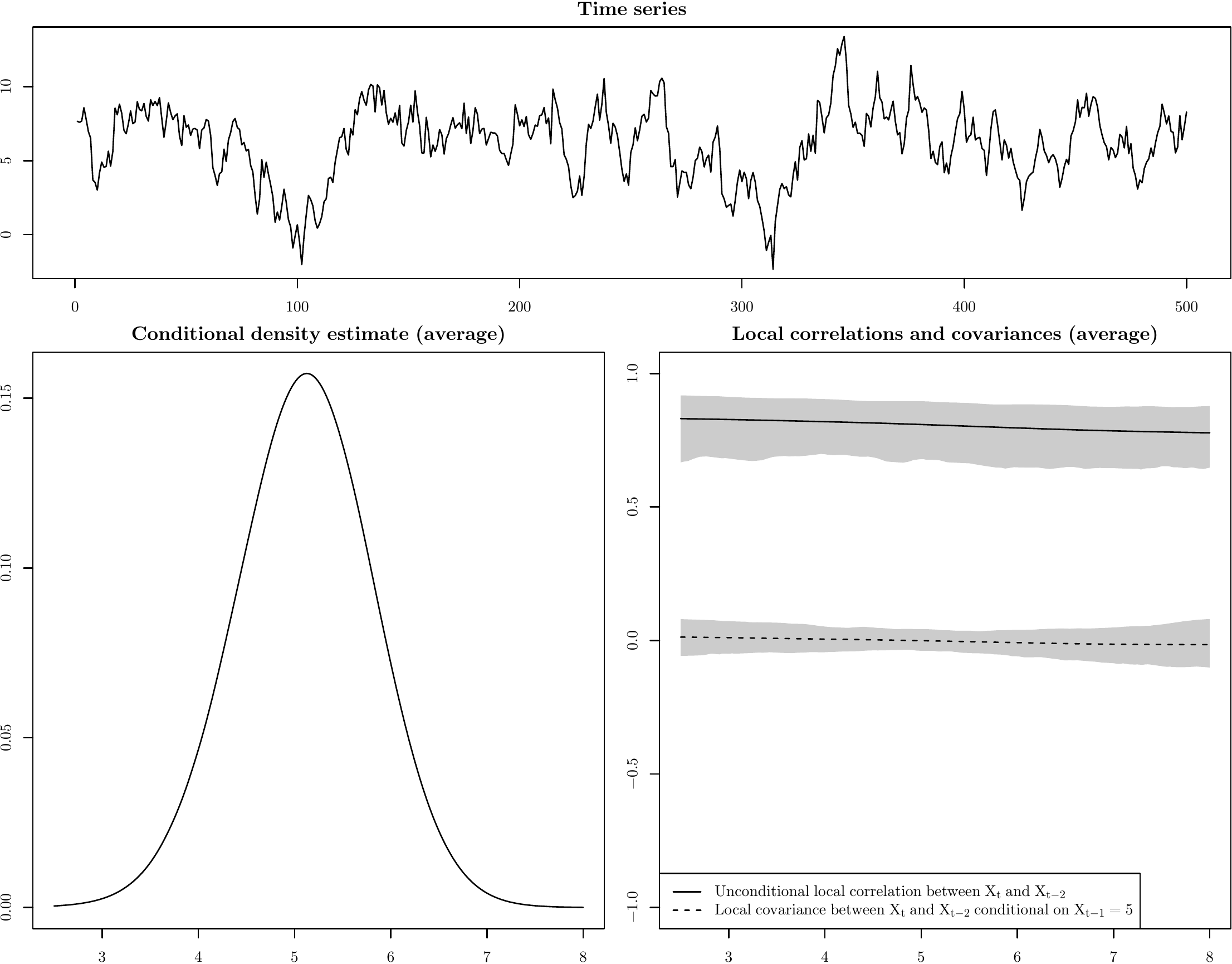}
\caption{The top panel displays a simulated time series. The lower left panel displays the average of the estimated conditional densities of $X_t|X_{t-2}=5$, and the lower right panel shows the unconditional diagonal local correlation between $X_t$ and $X_{t-2}$, as well as the same quantity when conditioned on the intermediate value $X_{t-1}$, with 95\% empirical confidence intervals.}
\label{fig:timeseries-sim}
\end{figure*}

The partial autocorrelation function for a stationary time series at lag $k$ is the correlation between $X_t$ and $X_{t-k}$, given the values of the intervening lags \citep[p. 98]{brockwell2013time}. The concept of partial correlation is very important, especially in the analysis of conditional dependencies in Bayesian networks. Partial local correlation is a natural extension of local correlation in light of the new theory allowing for dependent observations. Consider for example the nonlinear AR(1) model
$$X_t = 0.8X_{t-1} + 0.5\sqrt{|X_{t-1}|} + Z_t,$$
where the $Z_t$s are independent standard normal innovations. One realization of length 500 is plotted in the upper panel of Figure \ref{fig:timeseries-sim}. There is strong serial dependence in this model. Indeed, if we estimate the joint density of the lagged values $X_t$ and $X_{t-2}$ using the LGDE methodology, the estimated local correlation is close to 1. This can be seen in the lower right panel of Figure \ref{fig:timeseries-sim}, in which the local correlation for 300 realizations has been averaged and plotted as a solid line along the diagonal $x_t=x_{t-2}$, along with the empirical 95\% confidence interval. We do know from the Markov property of $\{X_t\}$, however, that $X_t$ is independent of $X_{t-2}$ given $X_{t-1}$, and this is clearly reflected in the estimated local covariance between the two variables for the joint $conditional$ density of $(X_t, X_{t-2})|X_{t-1}=x_{t-1}$ (where $x_{t-1}=5$ in this particular case), that has been plotted as a dashed line. We use the term local covariance here, instead of local correlation, because the diagonal elements in $\vec{\Sigma}$ as defined by (\ref{eq:locvar}) are no longer 1. As seen in the lower right panel of Figure \ref{fig:timeseries-sim}, the local covariance practically vanishes when the intermediate variable is conditioned upon.

The average of the estimated conditional densities in question has been plotted along its diagonal in the lower left panel of Figure \ref{fig:timeseries-sim}. 

\subsection{Forecasting the value-at-risk of a portfolio}
\label{subsec:var}
\begin{figure*}[t]
\includegraphics[width = \textwidth]{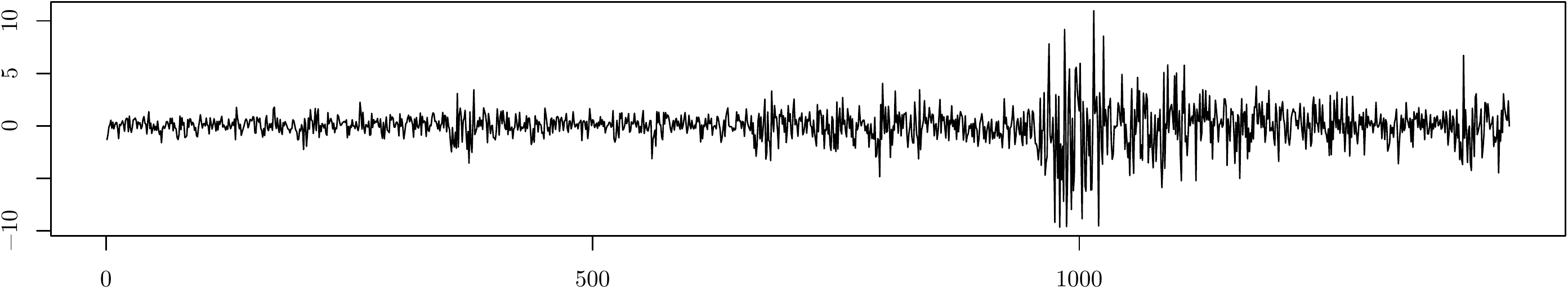}
\caption{Value of the portfolio over a period of 1442 days.}
\label{fig:timeseries}
\end{figure*}
\begin{table} 
\centering
\caption{Proportion of observations exceeding the estimated VaR} 
\begin{tabular}{lccc} 
\hline \noalign{\smallskip}
& \multicolumn{3}{c}{Level} \\
\noalign{\smallskip}\hline
\noalign{\smallskip}
Method & 0.005 & 0.01 & 0.05 \\
\noalign{\smallskip}\hline 
\noalign{\smallskip}
LGDE & 0.014 & 0.017 & 0.072 \\
np   & 0.084 & 0.097 & 0.161 \\
Kernel & 0.117 & 0.134 & 0.187 \\
Gaussian & 0.045 & 0.064 & 0.125 \\
\label{tab:VaR}
\end{tabular} 
\end{table}
   
There is a vast literature available on portfolio optimization theory. A vital element when selecting the optimal distribution of wealth over a set of assets is the estimation of risk, of which the Value-at-Risk (VaR) is a common measure. The VaR of a portfolio at level $\alpha$ is simply the upper $(1-\alpha)$-quantile of the loss-distribution of the portfolio, which usually needs to be estimated from past data. 

We look at the S\&P 500 data from Section \ref{subsubsec:irr}, as well as the corresponding log-returns on the British FTSE 100 index and the Norwegian OBX, and consider the observations on all 1443 days. In this toy example, we will show that our conditional density estimator may well be used as an instrument in estimating the VaR.

We wish to estimate the daily VaR of a portfolio consisting of each of these indices, equally weighted, conditioned on the observed log-returns on preceding days. The log-returns of this portfolio is plotted in Figure \ref{fig:timeseries}. Denote by $(X_1, \ldots, X_4)$ the four-dimensional vector that we observe each day, in which $X_1$ is the value of the portfolio that day, and $X_2,\ldots, X_4$ are the values of its individual components on the preceding day. On each day we estimate the conditional density of $X_1|X_2=x_2, \ldots, X_4 = x_4$ and calculate the $\alpha$-level VaR by numerical integration. We do the same by using the non-parametric kernel estimator by \cite{li2007nonparametric}, naive kernel estimator, as well as by assuming the data to be jointly Gaussian and calculating the quantile from a fully parametric fit. We start our analysis on day number 500, and for computational feasibility, we calculate the bandwidths for all methods on the first day of analysis only, and keep them constant throughout the period. 

Table \ref{tab:VaR} displays the result of our analysis. For each method we count the proportion of observations that exceed the estimated VaR on the corresponding day. We see that all methods under-estimate the risk, but the LGDE-approach is clearly the better performer, which we believe is due to its tendency to allow fat tails in the density estimates, see e.g. Figure \ref{fig:varsingle}, even though it has a \emph{local} Gaussian tail. 

A thorough treatment of this topic would include pre-filtering of the data using for example a GARCH-type model as found in \cite{palaro2006using}, as well as implementation of the LGDE in optimization over the portfolio weights, but that is beyond the scope of this paper.

\section{Conclusion and further work} \label{sec:conclusion}
Constructing non-parametric estimates of conditional density functions is a fundamental problem in statistics, but it is difficult, because many of the existing methods rely either on the traditional kernel density estimator, or on separate estimates of the numerator and denominator in the definition of the conditional density, or, most often, both. This could work in lower dimensional problems, especially if we keep ourselves away from the tails of the distribution in question.

We have shown, however, that by using the LGDE methodology, both of these problems tend to disappear. The simplified locally Gaussian estimates cope far better in higher dimensions than the kernel estimator, and it provides an explicit expression of the conditional density estimates, without the need for separate estimates of the numerator and denominator. The result is a general conditional density estimator for continuous data that is robust against dimensionality issues, modeling error, as well as noise induced by irrelevant variables.

These properties have been demonstrated through examples and asymptotic derivations. A more comprehensive theoretical analysis of the LGDE-framework and its possible generalizations remains to be developed, and will be the subject of later studies. For example, the degree to which a general multivariate density function can be characterized by pairwise locally Gaussian correlations, or the distance between $f(\vec{x})$ and $f_0(\vec{x})$ in keeping with the notation from Section \ref{sec:asymp}, is a challenge, cf. \cite{otneim2016locally}. Further, if the LGDE-approach can be labeled as a two-fold approximation compared to the fully non-parametric, or $p$-fold, estimation procedure in which we omit the simplification (\ref{eq:simplification}), it might be worthwhile to develop a general procedure allowing for a $k$-fold model, in which each local correlation depends on $k$ variables, with $k$ increasing, and these variables being selected based on data analogously to variable selection methods in regression. In theory, this can be generalized even further by replacing the normal distribution as a building block, with another member of the family of elliptical distributions that also organizes its parameters in a covariance-like matrix structure. Deriving conditional densities from such a general model requires more work, but should in principle be possible.   

\appendix 

\section{Proofs}

\subsection{Proof of Theorem \ref{thm:consistency}} \label{app:proofconsistency}
Except from a slight modification that accounts for the replacement of independence with $\alpha$-mixing, the proof of Theorem \ref{thm:consistency} is identical to the corresponding proof in \cite{otneim2016locally}, which again is based on the global maximum likelihood case covered by \cite{severini2000likelihood}. For each location $\vec{z}$, that we for simplicity suppress from notation, denote by $Q_{\vec{h}_n,K}(\rho)$ the expectation of the local likelihood function $L_n(\rho, \vec{Z})$. Consistency follows from uniform convergence in probability of $L_n(\rho, \vec{Z})$ towards $\allowbreak Q_{\vec{h}_n,K}\allowbreak(\rho)$, conditions for which are provided in Corollary 2.2 by \cite{newey1991uniform}. 

The result requires compact support of the parameter space, equicontinuity and Lipschitz continuity of the family of functions $\{Q_{\vec{h}_n, K}(\rho)\}$, as well as pointwise convergence of the local likelihood functions. Compactness is covered by Assumption \ref{ass:compactness}, and the demonstration of equi- and Lipschitz continuity in \cite{otneim2016locally} does not rely on the independent data assumption. Pointwise convergence follows from a standard non-parametric law of large numbers in the independent case. Our assumption \ref{ass:alpha} about $\alpha$-mixing data, however, ensures that pointwise convergence still holds, see for example Theorem 1 by \cite{irle1997consistency}, conditions for which are straightforward to verify in our local likelihood setting.

The rest of the proof is identical to the corresponding argument by \cite[pp. 105-107]{severini2000likelihood}.

\subsection{Proof of Theorem \ref{thm:anorm}} \label{app:proofanorm}
Consider first the bivariate case, in which there is only one local correlation to estimate. The first part of the proof goes through exactly as in the iid-case of \cite{otneim2016locally}. We follow the argument for global maximum likelihood estimators as presented in Theorem 7.63 by \cite{schervish1995theory}. The statement of Theorem \ref{thm:anorm} follows provided that 
\begin{equation}\label{oneparameter}  
Y_n(\vec{z}) = \sum_{i=1}^nK\left(|\vec{h}_n|^{-1}(\vec{Z}_i - \vec{z})\right)u(\vec{Z}_i,\rho_0) = \sum_{i=1}^nV_{ni}, 
\end{equation} 
is asymptotically normal, and this follows from a standard Taylor expansion. In the iid-case, the limiting distribution of (\ref{oneparameter}) is derived using the same technique as when demonstrating asymptotic normality for the standard kernel estimator, for example as in the proof of Theorem 1A by \cite{parzen1962estimation}. We establish asymptotic normality of (\ref{oneparameter}) in case of $\alpha$-mixing data, however, by going through the steps used in proving Theorem 2.22 in \cite{fan2003nonlinear}. Let $W_i = h^{-1}V_{ni}$, then
\begin{align*}
\frac{1}{nh^2}\textrm{Var}(Y_n(\vec{z}))
&= \frac{1}{nh^2} \left\{ \sum_{i=1}^n \textrm{Var}(V_{ni}) + 2\sum\sum_{1\leq i < j \leq n}\textrm{Cov}(V_{ni},V_{nj})\right\} \\
&= \textrm{Var}(W_1) + 2\sum_{j=1}^n (1-j/n)\textrm{Cov}(W_1, W_{j+1}),
\end{align*}
where
\begin{align*}
\textrm{Var}(W_1) &= \textrm{E}(W_1^2) - (\textrm{E}(W_1))^2 \\
&= \int h^{-2}u^2(\vec{z}, \rho_0)K^2(h^{-1}(\vec{y} - \vec{z}))f(\vec{y}) \,\textrm{d}\vec{y} + O(h^2) \\
&= \int u^2(\vec{z} + h\vec{v})K^2(\vec{v})f(\vec{z} + h\vec{v})\,\textrm{d}\vec{v} + O(h^2) \\
&\rightarrow u^2(\vec{z}, \rho_0)f(\vec{z})\int K^2(\vec{v})\,\textrm{d}\vec{v} \stackrel{\textrm{def}}{=} M(\vec{z}) \,\,\textrm{as}\,\,
\vec{h}\rightarrow0,
\end{align*}
and
$$
|\textrm{Cov}(W_1, W_{j+1})|=|\textrm{E}(W_1W_{j+1}) - \textrm{E}(W_1)\textrm{E}(W_{j+1})|  = O(h^2),
$$
using the same argument once again. Therefore,
$$\left|\sum_{j=1}^{m_n}\textrm{Cov}(W_1,W_{j+1})\right| = O(m_nh^2).$$
\cite{fan2003nonlinear} require that 
\begin{equation}\label{ass:finite}
\textrm{E}(u(\vec{Z}_n, \rho_0(\vec{z}))^{\delta})<\infty
\end{equation} 
for some $\delta>2,$ but this is of course true for our transformed data, because it is marginally normal. In proposition 2.5(i) by \cite{fan2003nonlinear} we can therefore use $p=q=\delta >2$ in order to obtain, for some constant $C$,

$$|\textrm{Cov}(W_|, W_{j+1})|\leq C\alpha(j)^{1-2/\delta}h^{4/\delta-2}.$$
Let $m_n=(h_n^2|\log h_n^2|)^{-1}$. Then $m_n\rightarrow\infty$, $m_nh^2\rightarrow0$, and
$$
\sum_{j=m_n+1}^{n-1}|\textrm{Cov}(W_1,W_{j+1})|\leq C\frac{h^{4/\delta-2}}{m_n^{\lambda}}\sum_{j=m_n+1}^nj^{\lambda}\alpha(j)^{1-2/\delta}\rightarrow0,
$$
which follows from assumption \ref{ass:alpha}. Thus,
$$\sum_{j=1}^{n-1}\textrm{Cov}(W_1, W_{j+1})\rightarrow0,$$
and it follows that 
$$\frac{1}{nh^2}\textrm{Var}(Y_n(\vec{z})) = M(\vec{z})(1+o(1)).$$

The proof now continues exactly as in \cite{fan2003nonlinear} using the "big block small block" technique, but with the obvious replacement of $h$ with $h^2$ to accommodate the bivariate case. 

We expand the argument to the multivariate case using the Cram\`{e}r-Wold device. Let $\vec{\rho} = (\rho_1, \ldots, \rho_d)^T$ be the vector of local correlations, where $d = p(p-1)/2$, write $\vec{u}(\vec{z}, \vec{\rho}_0) = (u_1(\vec{z}, \vec{\rho}_0), \ldots, u_d(\vec{z}, \vec{\rho}_0))$ and let $\vec{S}_{n}(\vec{z}) = \{S_{ni}(\vec{z})\}_{i=1}^d$, where
$$S_{ni} = \sum_{n=1}^nu_k(\vec{Z}_t, \vec{\rho}_0)K(|\vec{h}|^{-1}(\vec{Z}_t - \vec{z})).$$  
We must show that 
\begin{equation}
\label{cramerwold}
\sum_ka_kS_{nk} \stackrel{\mathcal{L}}{\rightarrow} \sum_ka_kZ_k^*,
\end{equation}
where $\vec{a} = (a_1, \ldots, a_d)^T$ is an arbitrary vector of constants, and $\vec{Z}^* = (Z_1^*, \ldots, Z_k^*)$ is a jointly normally distributed random vector. Because of Slutsky's Theorem, it suffices to show that the left hand side of (\ref{cramerwold}) is asymptotically normal. This follows from observing that it is on the same form as the original sequence comprising $S_n$, with
$$\sum_ka_kS_{nk} = \sum_nu^*(\vec{Z}_n, \vec{\rho}_0)K(|\vec{h}|^{-1}(\vec{Z}_n-\vec{z})),$$
where $u^*(\vec{Z}_n, \vec{\rho}_0) = \sum_ka_ku_k(\vec{Z}_n,\vec{\rho}_0)$. It is well known that any measurable mapping of a mixing sequence of random variables inherit the mixing properties of the original series, so condition \ref{ass:alpha} is therefore satisfied by the linear combination. The new sequence of observations satisfies (\ref{ass:finite}) because it follows from Jensen's inequality that for $\delta>2$,
\begin{align*}
\left[\frac{u^*(\vec{Z}_t, \vec{\rho}_0)}{\sum_ka_k}\right]^{\delta} &= \left[\frac{\sum_ka_ku_k(\vec{Z}_t, \vec{\rho}_0)}{\sum_ka_k}\right]^{\delta} \\&\leq \frac{\sum_ka_k[u_k(\vec{Z}_t,\vec{\rho}_0)]^{\delta}}{\sum_ka_k},
\end{align*}
so that
$$\textrm{E}[u^*(\vec{Z}_t,\vec{\rho}_0)]^{\delta} \leq \sum_ka_k\textrm{E}[u_k(\vec{Z}_t,\vec{\rho}_0)]^{\delta}\left[\sum_ka_k\right]^{\delta - 1}<\infty.$$

The off-diagonal elements in the asymptotic covariance matrix are zero using the same arguments as in \cite{otneim2016locally}.

\subsection{Proof of Theorem \ref{thm:density}}
\label{app:densityproof}
The key to proving \ref{thm:density} is to show that the asymptotic distribution of (\ref{oneparameter}) remains unchanged when the marginally standard normal stochastic vectors $\vec{Z}_n$ are replaced with the pseudo-observations 
$$\widehat{\vec{Z}}_n = \left(\Phi^{-1}(\widehat F_1(X_{j1})), \ldots, \Phi^{-1}(\widehat F_p(X_{jp}))\right)^T,$$
where $\widehat F_i(\cdot)$, $i=1,\ldots,p$ are the marginal empirical distribution functions. This is shown in the independent case under assumptions \ref{ass:geenens1}-\ref{ass:geenens2} in \cite{otneim2016locally}, by providing a slight modification to Proposition 3.1 by \cite{geenens2014probit}. The essence in that proof is the convergence of the empirical copula process, which remain unchanged if we replace the assumption of independent observations with $\alpha$-mixing, according to \cite{bucher2013empirical}.

The multivariate delta method states that if $\allowbreak\sqrt{nh^2}(\theta_n - \theta) \stackrel{\mathcal{L}}{\rightarrow} N(0, A)$ and $q:R^n\rightarrow R$ has continuous first partial derivatives, then $\sqrt{nh^2}(q(\theta_n) - q(\theta)) \stackrel{\mathcal{L}}{\rightarrow} \allowbreak N(0, \allowbreak \nabla q(\theta)^TA\nabla q(\theta))$ \citep[p. 403]{schervish1995theory}). In our case, $q(\vec{\rho}) = \Psi(\vec{z}, \vec{R})g(\vec{x})$, and $$\nabla q(\vec{\rho}) = \Psi(\vec{z}, \vec{R})g(\vec{x})\vec{u}(\vec{z}, \vec{R}),$$
from which the result follows immediately.

\section{Large sample properties of the logspline estimator} \label{app:logspline}
The current implementation of our method in the R programming language \citep{R} uses the logspline method by \cite{stone1997polynomial} for marginal density estimation. The asymptotic theory for the logspline estimator is derived by \cite{stone1990large}, but restricted to density functions with compact support. \cite{otneim2016locally} relax this requirement using a truncation argument, so that the requirement of compact support can be replaced by an assumption on the tails of the unknown density not being too heavy.

In particular, \cite{stone1990large} denotes by $\epsilon \in (0,1/2)$ a tuning parameter that determines the asymptotic rate at which new nodes are added to the logspline procedure. If $\epsilon$ is close to zero, new nodes are added quickly to the procedure, and as $\epsilon\rightarrow1/2$, new nodes are added very slowly. \cite{stone1990large} then provides the following asymptotic results (again, under the assumption that the true density $f(\vec{x})$ has compact support):
$$\sqrt{n^{0.5 + \epsilon}}\left(\widehat f_i(x) - f(x)\right) \stackrel{\mathcal{L}}{\rightarrow} N(0, \sigma_1^2),$$
and
$$\sqrt{n^{0.5}}\left(\widehat F_i(x) - F(x)\right) \stackrel{\mathcal{L}}{\rightarrow} N(0, \sigma_2^2).$$
\cite{otneim2016locally} show that these results hold if there exist constants $M>0$, $\gamma > 2\epsilon/(1-2\epsilon)$, and $x_0>0$ such that $f(x)\leq M|x|^{-(5/2+\gamma)}$ for all $|x|>x_0$, so the 'worst case scenario' with respect to assumption \ref{ass:backtrans} when using the logspline estimator for the final back-transformation, is $\epsilon$ being close to zero. In that case, we must require the bandwidths to tend to zero fast enough so that $n^{1/2}h^2\rightarrow 0$, but on the other hand, that will allow $\gamma$ to approach zero, and thus the tail-thickness of the density to approach that of $|x|^{-5/2}$.  

What remains here is to show that these results hold also in the case where the observations are $\alpha$-mixing. This is easily done by replacing the use of the iid central limit theorem (clt) in the proof of Theorem 3 in \cite{stone1990large}, with a corresponding clt that holds under our mixing condition. For example, Theorem A by \cite{peligrad1992central} proves the clt under $\alpha$-mixing provided that the mixing coefficients satisfy $\sum_{n=1}^{\infty}\alpha(n)^{1-2/\delta} < \infty$. This condition follows from our assumption \ref{ass:alpha}.

\section{Supplementary material}
The file \texttt{code.zip}, that accompanies this article, contains the data sets that has been used, as well as routines for implementing the conditional density estimator in the R programming language \citep{R}.

\bibliographystyle{plainnat}
\bibliography{conditionalDensity}   
\end{document}